\let\@fnsymbol\@arabic
\newtheorem{lemma}{Lemma}
\newtheorem{definition}{Definition}
\newtheorem{theorem}{Theorem}
\let\oldtheorem\theorem
\renewcommand{\theorem}{\vspace{2ex plus 0.5ex minus 0.2ex}\oldtheorem}
\let\oldlemma\lemma
\renewcommand{\lemma}{\vspace{2ex plus 0.5ex minus 0.2ex}\oldlemma}
\title{Impact of Artificial Intelligence on the Environment through Technical Change: A Free Dynamic Equilibrium Approach}
\author{Pham Van Khanh}
\affil{Vietnam Academy of Science and Technology, Hanoi, Vietnam.\hspace{10mm} \Letter \quad van\_khanh1178@yahoo.com}
\author[2]{Minh Le}
\affil{Georgia Institute of Technology, Atlanta, Georgia, United States. \hspace{20mm} \Letter \quad dd350@gatech.edu}
\date{}
\begin{document}
\maketitle
	
\abstract{In today's world, humanity faces unprecedented environmental challenges, such as climate change and natural disasters. The emergence of artificial intelligence (AI) has opened new doors in our efforts to address our planet's pressing problems. However, many doubt the actual extent of impact that AI can have on the environment. Particularly, AI's role in dirty production is a drawback that is largely absent from the literature. To investigate the impact of AI on the environment, we establish mathematical models to demonstrate the economy and the production process using outdated and advanced technologies. The secondary results are stated as lemmas, the main results are stated as theorems. From the theorems we conclude that AI may not on its own prevent an environmental disaster, partially due to its concurrent contribution to dirty production. With temporary government intervention, however, AI is able to avert an environmental disaster.}
 
 \vspace{3mm} \noindent
 \textbf{Keywords:} AI, environment, disaster, mathematical model, optimal, government.
 	
 \vspace{3mm} \noindent
 \textbf{Funding:} The authors did not receive support from any organization for the submitted work.
	
\vspace{3mm} \noindent
\textbf{Conflict of Interest:} The authors have no relevant financial or non-financial interests to disclose.
	
\vspace{3mm} \noindent
\textbf{Ethical Approval:} This article does not contain any studies with human participants or animals performed by any of the authors.
	
	%\tableofcontents
	
\vspace{20cm}
	
\setlength{\parskip}{2ex plus 0.5ex minus 0.2ex}
\section{Introduction}
The increase in the average global temperature on Earth is one of the main factors causing climate change (Stern \& Robert, 2014). In 2020, average annual temperatures increased by 1.25 degree Celsius over the past century and by about 0.54 degrees Celsius over the past 30 years alone (NASA, 2023).
 
Climate change causes significant damages to the economy and the well-being of mankind. Regarding the economy, it is shown via data analysis that a wider temperature range reduces capital investment and thus hinders economic growth (Lu et al., 2019). Climate change disproportionately affects poorer countries (Tol, 2018), thus worsening global inequality and making it even more difficult for countries to escape poverty. It also is a main driver of some extreme weather events such as cyclones and storms (Clarke, 2022), which can destroy physical capital such as agricultural land, factories and buildings (Batten, 2018).

With respect to our physical well-being, major storms (whose increasing frequency are driven by climate change) can cause major injuries as people are hit by flying debris (Alderman, Turner \& Tong, 2012). Moreover, extreme heat increases the ozone concentration on the ground, which can result in breathing problems and lung diseases (Crimmins, 2016).

Moreover, chronic weather conditions such as droughts worsens food security, resulting in increasing stress and anxiety (Manning \& Clayton, 2018). High temperatures are associated with higher suicide rate (Lawrance, 2021). Extreme events such as hurricanes can cause long-lasting trauma and depression due to injury of oneself or loved ones, property destruction and adverse living conditions (Galea, 2007).

With such extreme consequences, climate change must be addressed, and to achieve this goal, we need to evaluate the causes of climate change. This is not an easy task and while research is still in progress, a majority of experts agree that anthropogenic climate change has a significant impact on our ecosystems (see e.g. Rosenzweig et al., 2008; Kaufmann et al., 2011; Hansen \& Stone, 2016)

The fact that climate change is highly likely anthropogenic presents a unique challenge: that many attempts at slowing down climate change go against economic growth, the main momentum of the vast majorities of economies. In fact, some are so concerned they believe it is unlikely to be solved, and instead propose \textit{degrowth}: a reduction in the world's GDP. A significant portion of the literature (e.g. Kallis, 2011; Sekulova et al., 2013; Morgan, 2020) where authors argue that attempts to phase out dirty energy and production cannot succeed while the world is still chasing GDP growth.
 
Within this discussion, artificial intelligence (AI) is receiving more and more attention as a potential solution to tip the scale towards sustainable growth, for many AI applications have shown significant benefits for the environment without impeding economic growth. AI is used for climate modeling and prediction (Chantry et al., 2021). AI-driven systems optimize renewable energy harvesting, improving efficiency and sustainability in the energy sector (Patel et al., 2022). In smart buildings, AI manages energy consumption, reducing costs and emissions (Farzaneh et al., 2021). In agriculture, smart agriculture leverages sensors and data analytics to maximize crop yields while conserving land and water resources (Hemming et al. 2020), and early detection of crop diseases and pests allows a reduction in toxic pesticide usage (Selvaraj et al., 2019).

AI also plays a large role in enhancing economic growth. We show in section 2 that with an abundance of different AI and machine learning models, AI provides a wide-ranging and effective tool set to solve problems in research, thus increasing the likelihood of success and the real-world impact of research. This translates to improvements in productivity and efficiency in industries. Developments in AI, therefore, are essentially supply-side policies that enhance economic growth. Significantly, this also signals that environmental protection and economic growth are not mutually exclusive and can go in tandem together. Thus, microeconomics equilibrium that manages to heavily involve AI may actually improve environmental quality, and policies to correct externalities may not be necessary.

However, critics have raised concerns about the actual ability of AI to drive economic growth and environmental protection. Regarding economic growth, although the literature generally agrees that AI has a positive impact on GDP, they disagree on the extent due to limitations of AI. We will argue in section 2 that these limitations are likely to be addressed in the future, and thus AI likely will have a large positive impact on the economy. Regarding environmental protection, its impact in addressing climate change is difficult to fully assess also due to limitations of AI. A major limitation that is popular in the literature is the large carbon footprint of training AI model (Tamburrini, 2022).

Our paper contributes to this literature by investigating another largely unexplored limitation: The extent to which AI accelerates dirty production, and its consequences on the environment. Amidst talks about how AI drives clean innovation, we are cautious about the optimism surrounding them because we need to recognise and address the fact that AI can be applied to the dirty sector too.

Research on this is very scarce; we can only locate Andres (2022) who claims that AI innovation benefits clean production three times more than dirty production. Our paper will be built on that result and explore whether this allows the clean sector to overtake the dirty sector.

In Acemoglu et al. (2012), the authors prove that an environmental disaster cannot be avoided under economic equilibrium. We will modify the mathematical model in this paper to add the impact of AI in production. We will assume a highly optimistic scenario for AI development, that AI progress grows at an exponential rate.

Our paper will show that despite AI seemingly large potential for the environment, it is uncertain that AI alone can prevent an environmental disaster in economic equilibrium. We further argue qualitatively that it is unlikely for AI to achieve such a goal. Only when AI is combined with temporary government intervention, can it prevent an environmental disaster. We will show in our discussion of the result that the contribution of AI to the dirty sector is a major reason for AI's inability to achieve this goal.
 
Our paper is structured as follows: part 1 is for the introduction, part 2 is to access the impact of AI on the economy, part 3 is for proposal and model building, part 4 includes the results of the research, part 5 is for discussion of the results, part 6 is about the computer simulation of the model, and finally part 7 is for the conclusion.

\section{Impact of AI on Research and Production}

This paper uses a mathematical model in which scientific research improves the quality of machinery, which increases productivity. In this section, we argue that AI can enhance productivity, in order to integrate its impact into our model. We also address criticisms of AI and attempt to quantify the impact of AI based on predictions from other researches.

\subsection{How AI Increases Productivity}

We explains how AI either supplements science research or is the product of research itself, through which productivity is improved. Two review articles by Wang et al. (2023) and Xu et al. (2021) discuss the role of AI through different lenses. Wang et al. focuses on the ways and channels through which AI aids research, while Xu et al. explores the fields of research in which AI has major positive impacts.

In Wang et al., the authors explains that AI can assist research at many junctures, which demonstrates how AI supplements science research. Some of those junctures are:

\begin{itemize}
    \item \textit{Data generation:} AI itself can be used to generate additional data from existing dataset to train the main AI model. Klenn \& Bergmann (2019) uses this approach to develop an AI model that detects failures in a factory, where raw training data is scarce and inaccessible. This allows for timely maintenance, which reduces downtime and increases productivity.
    
    \item \textit{Self-supervised learning:} When data labelling is not available and is too costly to be done, self-supervised learning can be used to learn the dataset without labels and generate its own features. In Magalhães et al. (2023), self-supervised learning is used to ``classifies anomalies in an industrial space", making this task more efficient.

    \item \textit{Hypothesis generation:} A hypothesis is an educated guess of the answer to a problem, which is to be tested by scientific research. However, it can be difficult to obtain a good guess in the first place for complicated, obscure problems. AI can help generating hypotheses by exploring the data given. Liu et al. (2023) uses a deep learning model to hypothesise possible antibiotics that are efficient in killing a certain bacteria.

    \item \textit{Simulation:} In cases where physical experiments are too costly or impractical, computer simulation, many of which are assisted by AI, can be used. Bruzzone \& Orsoni (2003) develop an AI-based model that simulate the logistic of a supply chain to evaluate proposed solution to increase efficiency.
\end{itemize}

On the other hand, Xu et al. surveys the application of AI in different fields, showing how researches produce AI application that enhances production. They include:

\begin{itemize}
    \item \textit{Information Technology:} AutoML is a tool that automate the machine learning development workflow, allowing non-experts in machine learning to still apply it to their problems. For example, de Souza et al. (2021) develop Spectral AutoML specifically to increase the efficiency of soft sensor development.

    \item \textit{Materials Science}: AI models can be used to predict properties of new materials, aiding material discovery and design as scientists can more quickly test properties of their material designs and tweak them as needed. Yazdani-Asrami et al. (2022) outlines how this is done in applied superconductivity, increasing productivity in this task.

    \item \textit{Geoscience:} Managing the water resource of an area is challenging, especially in the context of climate change bringing in additional uncertainty. AI models, such as one developed by Xiang et al. (2021), can perform this task more accurately.
\end{itemize}

These two review articles, combined with many industrial-related examples provided above, have shown the two channels through which AI contributes to increasing productivity. Further more, they also demonstrate both the breadth and depth of the impact of AI in research to aid productivity, as AI is used for multiple tasks in research, for research in many different fields. Thus, AI has an immense potential in propelling our scientific progress for production.

In addition, Andres (2022) shows that the clean production sector benefits three times more than the dirty sector from AI innovation. This observation still holds (albeit with a smaller multiplier) after fixing firm effects, showing that this property result from the inherent difference between clean and dirty production, instead from the difference in investment in research and development between firms. Therefore, this trend is highly likely to still hold far into the future.

\subsection{Evaluation of Criticisms of AI}
Despite its large potential, there are still many problems in AI and its application in research and production. Because of that, many scientists have rightfully raised concerns about AI, or cast doubt on the extent of impact AI can have. To arrive at a more nuanced, supported observation of the potential of AI, we evaluate some of its common criticisms.

Our central claim is that while the criticisms are largely valid, they are likely to be addressed and mitigated in the future. To argue for this, we will show two components:

\begin{itemize}
    \item There are existing researches that tackle those criticisms, and
    \item There will likely be significant progress in these research areas in the future.
\end{itemize}

We will thus conclude that AI will likely still have a significant positive impact on research and production.

\subsubsection{AI can be inaccurate}

No AI system is totally accurate. The best AI models have accuracy of around 90\%, meaning it is wrong in 1 out of 10 cases. This heavily limits the impact of AI on research and industry. Kwon, Raman, \& Moreno (2023) show that AI inaccuracy caused by bad training input makes manpower allocation inefficient, demonstrating its impact in an industrial setting. Might this be an eternal problem for AI, one that can't be overcome?

We believe the answer is no, because in practice, AI can be used with human to improve efficiency. Wilson \& Daugherty (2018) even predict that a future where AI collaborate with human is more likely than AI replacing us. To that extent, researches have put forward many ways to improve human-AI collaboration, which include increasing AI confidence (Zhang et al., 2020) and transparency (Vössing et al., 2022), providing descriptions of AI behaviour (Cabrera et al., 2023), and enhancing user decision control (Westphal et al., 2023).

Moreover, the large number of researches already existed shows that human-AI collaboration has the potential to be a highly effective solution for our issue. Recognising this, more researchers will focus on this research area, hoping that their works contribute to the literature that will be applied and have significant positive impact on the real world. Hence, we believe that there will likely be significant progress in human-AI research in the future.

\subsubsection{AI based on supervised learning only learn and cannot improve form the past}

Supervised learning is a subset of machine learning, which is in turn a subset of AI. Supervised learning refers to deducing patterns from labelled data to perform a task. If the task is classification (e.g. is this image of a cat or a dog?), the label will be one of the category. If the task is regression (e.g. how much should this house cost?), the label will be a numerical value.

In Hagendorff \& Wezel (2020), the author focus on supervised learning as a representative of AI because it ``is the method used in the vast majority of artificially intelligent applications." They criticise supervised learning models, arguing that because they are trained on past, existing data, they are only able to replicate the past instead of coming up with novel and creative ideas. They rebut counterexamples of arts by AI by claiming that they are only ``deformed re-configurations of existing art works." This ``status quo" preference is problematic if it reinforces the issues we are trying to change. For example, in an attempt to use AI to eliminate human bias in some justice systems, the AI actually reflects those existing biases (Malek, 2022).

We largely agree with the authors' usage of supervised learning as a representative of AI, as well as their criticism of supervised learning. However, we also believe this approach makes the authors' argument less applicable for the future where self-supervised learning is becoming a new frontier of AI research.

Self-supervised learning is also a form of machine learning that is used for similar tasks with supervised learning, but is one that uses unlabeled data for model training. This allows the model to deduce its own connections and come to its own conclusions which can be novel from human, reflecting an attempt to move away and improve from the status quo. For example, Sirotkin (2022) shows that social bias can be reduced with a careful choice of self-supervised model.

Moreover, we believe that self-supervised learning is highly likely to see significant progress in the future and might even replace supervised learning in many applications. Bergmann (2023) notes that acquiring labelled datasets for training can be difficult since there may be no existing dataset, and data labelling is a long, expensive and manually tedious process. Hence, self-supervised learning is an excellent alternative when acquiring labelled datasets is challenging. With machine learning being employed in more diverse, more challenging tasks with fewer labelled datasets available, self-supervised learning will be the new frontier in AI development, and research will surely focus more on this area in the future.

\subsubsection{AI faces difficulties in harder tasks}

In the discussion of AI contribution to the macroeconomy, Acemoglu (2024) differentiates tasks for AI to solve into easy and hard tasks. Easy tasks, he wrote, ``\textit{are defined by two characteristics}:

\begin{itemize}
    \item \textit{there is a simple (low-dimensional) mapping between action and the (perfect) outcome measure, and}
    \item \textit{there is a reliable, observable outcome metric.}"
\end{itemize}

Hard tasks are the opposite, meaning they either are highly complex, or their outcomes cannot be reliably or fully observed. Acemoglu reasonably argues that these difficulties heavily hinder the ability of AI to effectively solve the tasks, and hence questions the extent of contribution of AI to productivity and growth.

It however must be noted that Acemoglu chooses to focus on the short term future of the next 10 years only. In the longer term, which is the scope of our article, we claim that there are already solutions to address both the complexity and observability issues, with promising significant progress in the future.

Regarding complexity of tasks, a review paper by Joksimovic (2023) finds that human-AI collaboration has been a main solution explored in a number of papers, which has seen an increase in quality over time. The argument that human-AI collaboration will likely see significant progress in the future has been laid out in section 2.1.

Regarding reliable observability of results, reinforcement learning and its modifications have been used to solve problems under partially observable environments. In simple terms, a reinforcement learning models involves an actor and an environment, where in each turn, the actor choose an action to perform, insodoing interacts and changes the environment, and is granted a reward by the environment. The actor's goal is usually to maximise the sum of rewards over time.

Much progress has been made to improve reinforcement learning in partially observable environments. Spaan (2006) proposes approximate planning in his thesis. Choi and Kim (2011) explores recovering the reward function of the environment. Muškardin et al. (2023) goes further and suggests learning the entire environment model. Srinivasan \& Lanctot et al. (2018) applies the actor-critic model for this task, where the actor employs a `critic' intelligent model (to be trained) to evaluate the actor's actions.

Similar to human-AI collaboration, the vast number of papers and approaches in reinforcement learning for partially observable environments shows that it is a highly promising solution to the issue at hand, which is recognised by researchers to focus their research on, fuelling even more progress in the field.

\subsection{Quantifying the Impact of AI}

To quantify the impact of AI on production, we will assume that the percentage increase in production of the final good is the same as the percentage increase in GDP. We then review the literature on predictions of impact of AI on GDP growth, and find that they vary significantly. Acemoglu (2024) noted that quantifying the impact of AI on the economy is ``\textit{extremely difficult and will have to be based on a number of speculative assumptions}," explaining this situation.

To name a few, Acemoglu (2024) takes a relatively cautious stand, only predicting a contribution of 0.9\% over the next 10 years. Others are more optimistic: Goldman Sachs (2023) headline reads ``\textit{Generative AI could raise global GDP by 7\%}", over a 10-year period.

Some envision significantly larger growth. McKinsey \& Company (2023) predicts a \$2.6-\$4.4 trillion annual contribution to the global economy. Noting that the 2023 global GDP is \$105.44 trillion (World Bank, 2024), this contribution corresponds to an average of 3.32\% annual growth, or an estimate of 33.2\% growth over 10 years.

Korinek and Suh (2024) even claim a 100\% GDP growth over 10 years with the help of Artificial General Intelligence. Since the global GDP increased by 35.6\% from 2013 to 2023 (World Bank, 2024), this implies that AI is responsible for 64.4\% of GDP growth in a decade.

In light of such varied predictions, we will use all four of the above values in checking whether our result on avoiding an environmental disaster holds under them, and in conducting computer simulations.
	
\section{The Model}
The framework uses an economy where time is discrete and goes from $0$ to infinity. There is a unique final good produced competitively using inputs produced by either clean or dirty technology. Scientists conduct research to improve the quality of machines to increase production. Dirty production degrades the environment, which after a certain point becomes an environmental disaster.

\subsection{The Economy}
	
At time $t$, the economy produces $Y_t$ units of the unique final good using clean and dirty inputs $Y_c$ and $Y_d$ respectively. According to the aggregate production function:
\begin{equation} \label{Yt}
    Y_t=\left(Y_{ct}^\frac{\sigma-1}{\sigma}+Y_{dt}^\frac{\sigma-1}{\sigma}\right)^\frac{\sigma}{\sigma-1}
\end{equation}

where $\sigma>0$ is the elasticity of substitution between the clean and dirty sectors. We ignore the distribution parameter for simplicity. Throughout the paper, we assume that $\sigma>1$, meaning that clean and dirty inputs are gross substitutes.

The two inputs $Y_c$ and $Y_d$ are produced using labor and a continuum of machines that are sector-specific, meaning they can only be used on one sector, clean or dirty. They are calculated by:
\begin{equation} \label{Yjt}
    Y_{jt}=L_{jt}^{1-\alpha}\int_0^1A_{jit}^{1-\alpha}x_{jit}^\alpha \ di
\end{equation}

where $\alpha \in (0,1)$, $j\in \{c,d\}$, $A_{jit}$ is the quality of machine of type $i$ used in sector $j$ at time $t$, and $x_{jit}$ is the quantity of this machine.

Market clearing for labor means labor demand is less than labour supply. Normalising the latter to 1, we get:
\begin{equation} \label{Ljt_2}
    L_{ct}+L_{dt}\leq 1
\end{equation}

Market clearing for the final good means:
\begin{equation} \label{nt}
    C_t=Y_t-\psi\left(\int_0^1x_{cit}\ di+\int_0^1x_{dit}\ di\right)
\end{equation}

where at time $t$, $C_t$ is the consumption of the final good, and $\psi$ is the number of final good units needed to produce a machine.

Let $p_{ct}$ and $p_{dt}$ be the price of clean and dirty inputs respectively. Since clean and dirty inputs are used competitively to produce the final good, we have:
\begin{equation} \label{pjt}
    \frac{p_{ct}}{p_{dt}}=\left(\frac{Y_{ct}}{Y_{dt}}\right)^{-\frac{1}{\sigma}}
\end{equation}

In addition, by normalising the price of the final good to one, we get:
\begin{equation} \label{pjt_3}
    \left(p_{ct}^{1-\sigma}+p_{dt}^{1-\sigma}\right)^\frac{1}{1-\sigma}=1
\end{equation}

\subsection{The Innovation Process}

Scientists conduct research with the help of AI to improve the quality of machines. Scientists can choose whether to research in the clean or dirty sector based on higher expected profit. We assume that the quality of AI for each sector increases at a fix rate per decade. This means that AI quality in sector $j$ at time $t$, $I_{jt}$, grows exponentially.

Since AI contributes to the clean sector three times more than to the dirty sector, we model $I_{jt}$ as:
\begin{equation} \label{Ict}
    I_{ct}=e^{3kt}, \, I_{dt}=e^{kt}
\end{equation}

where $3k$ and $k$ denotes the rate of development of AI in the clean and dirty sector respectively. Note that there is no additional constant term accompanying the exponent because we assume that at $t=0$, AI has negligible impact on production, meaning $I_{jt}=1$, yielding a constant term of 1.

The probability that a scientist is successful in innovation in sector $j$ is $\eta_j$. If successful, the research improves the quality of a machine in the dirty sector by a factor of $1+\gamma$ (where $\gamma>0$), from $A_{dit}$ to $(1+\gamma)A_{dit}$.

In the clean sector however, the effectiveness of successful research is greater by a factor of $I_t$ due to the impact of AI. Quality of machines thus increases by a factor of $1+\gamma I_t$, from $A_{cit}$ to $(1+\gamma e^{kt})A_{cit}$

Market clearing for scientists means:
\begin{equation} \label{sjt}
    s_{ct}+s_{dt}\leq 1
\end{equation}

We also define:
\begin{equation} \label{Ajt}
    A_{jt}=\int_0^1A_{jit}\ di
\end{equation}

The process of innovation is then described as:
\begin{equation} \label{Ajt_2}
    A_{jt}=(1+\gamma\eta_jI_{jt}s_{jt})A_{jt-1}
\end{equation}

\subsection{The Environment}

Let $S_t$ is the quality of the environment. We let $S\in [0,\bar{S}]$, where $\bar{S}$ is the quality of the environment without any human pollution. We also assume that it is the initial level of environmental quality, meaning $S_0=\bar{S}$.

The quality of the environment evolves according to:
\begin{equation} \label{St}
    S_{t+1}=-\xi Y_{dt}+(1+\delta)S_t
\end{equation}

when the right-hand side (RHS) is within $[0,\bar{S}]$. When the RHS is negative, $S_{t+1}=0$, and when it is larger than $\bar{S}$, $S_{t+1}=
\bar{S}$. Here, $\xi$ denotes the extent to which production in the dirty sector has a negative impact on the environment, and $\delta$ is the rate at which the environment regenerates itself.

For notational convenience, we also define
\begin{equation}
    \varphi=(1-\alpha)(1-\sigma)
\end{equation}

Finally, we define the following term:

\begin{definition}
    An environmental disaster refers to when $S_t=0$ for some finite $t$.
\end{definition}

\section{Free Dynamic Equilibrium}

\subsection{Preliminary Setups}

We first define the free dynamic equilibrium:

\begin{definition}
    A free dynamic equilibrium is given by sequences of wages of workers ($w_t$), prices of inputs ($p_{jt}$), prices of machines ($p_{jit}$),  demand for machines ($x_{jit}$), demand for labour ($L_{jt}$), scientist allocations ($s_{ct}$, $s_{dt}$) and quality of the environment ($S_t$) such that, in all time $t$:
    \begin{enumerate}
        \item ($p_{jit}$, $x_{jit}$) maximises profits of the producer of machine $i$ in sector $j$.
        \item $L_{jt}$ maximises profits of producers of input $j$.
        \item $Y_{jt}$ maximises the profits of final good producers.
        \item ($s_{ct}$, $s_{dt}$) maximises the expected profit of a scientist at date $t$.
        \item The wage $w_t$ and prices $p_{jt}$ clear the labour and input markets, respectively.
        \item $S_t$ changes are given by (\ref{St}).
    \end{enumerate}
\end{definition}

\subsection{Production in Equilibrium}

This section aims to express the ratio expected profits of scientists in the clean and dirty sector in terms of constant parameters and variables relevant to directed technical change, such as $\gamma$, $\eta_j$, $I_t$, $s_{jt}$ and $A_{jt}$.

    Firstly, we calculate $x_{jit}$, the demand for machines type $i$ in sector $j$ at time $t$.

\begin{lemma}
    \begin{equation} \label{xjit_2}
        x_{jit}=\left(\frac{\alpha^2}{\psi}\right)^\frac{1}{1-\alpha}p_{jt}^\frac{1}{1-\alpha}L_{jt}A_{jt}
    \end{equation}
\end{lemma}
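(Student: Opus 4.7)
The plan is a standard two-stage profit-maximisation derivation, following the logic of directed technical change models. I would first back out the inverse demand for machine $(j,i)$ faced by the input producer, then pin down the machine price from the monopolist's problem, and finally substitute the price back into the demand.

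For the first stage, I would write down the profit of the producer of input $j$, taking $p_{jt}$, the wage $w_t$, and the machine prices $p_{jit}$ as given: using (\ref{Yjt}), this profit is
\begin{equation*}
    p_{jt} L_{jt}^{1-\alpha} \int_0^1 A_{jit}^{1-\alpha} x_{jit}^\alpha \, di - w_t L_{jt} - \int_0^1 p_{jit} x_{jit}\, di.
\end{equation*}
Differentiating with respect to $x_{jit}$ and setting the derivative to zero produces the constant-elasticity inverse demand $p_{jit} = \alpha\, p_{jt} L_{jt}^{1-\alpha} A_{jit}^{1-\alpha} x_{jit}^{\alpha-1}$, which I would invert to obtain
\begin{equation*}
    x_{jit} = \left(\frac{\alpha\, p_{jt}}{p_{jit}}\right)^{\frac{1}{1-\alpha}} L_{jt} A_{jit}.
\end{equation*}

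For the second stage, I would use the fact that, by definition of the equilibrium, the producer of machine $(j,i)$ sets $p_{jit}$ to maximise its own profit $(p_{jit}-\psi)\, x_{jit}(p_{jit})$, since by (\ref{nt}) the marginal cost of producing a machine is $\psi$ units of the final good. The demand from the previous step has constant price-elasticity $1/(1-\alpha)$, so the monopoly markup rule gives the textbook closed form $p_{jit} = \psi/\alpha$. Substituting this back into the inverse demand collapses the $\alpha$ and $\psi$ factors into $(\alpha^2/\psi)^{1/(1-\alpha)}$, yielding the formula stated in the lemma (with the understanding that $A_{jt}$ in the display is intended as the relevant machine-quality term $A_{jit}$, consistent with the definition (\ref{Ajt})).

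No step poses a serious obstacle: the computation is algebraic once the two optimisation problems are correctly identified. The only point that requires care is distinguishing the competitive input producer (who treats $p_{jit}$ as given) from the monopolistic machine producer (who internalises the demand curve); conflating the two would give the wrong markup. I would therefore state explicitly which agent is solving which optimisation problem before turning the handle.
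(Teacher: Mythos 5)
Your proposal is correct and follows essentially the same route as the paper: derive the iso-elastic inverse demand for machine $(j,i)$ from the input producer's first-order condition, apply the monopoly markup rule to obtain $p_{jit}=\psi/\alpha$, and substitute back. Your remark about $A_{jit}$ versus $A_{jt}$ is well taken --- the paper's own proof makes the same silent replacement in its final line --- so the two arguments coincide in substance.
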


\begin{proof}
    Per Definition 2, condition 2, the profit-maximisation problem of the producer of input $j$ at time $t$ is:
    \begin{equation} \label{profit}
        \max_{L_{jt}}\left\{p_{jt} L_{jt}^{1-\alpha} \int_0^1 A_{jit}^{1-\alpha} x_{jit}^{\alpha} \; di - w_t L_{jt} - \int_0^1 p_{jit} x_{jit} \; di\right\}
    \end{equation}
    
    where $x_{jit}$ is chosen by the producer of machine $i$ in sector $j$ at time $t$ to maximise its profit (per Definition 2, condition 1). Note that the terms correspond to revenue of input produced, cost of labour and cost of machines respectively.
    
    To get the inverse curve of the input producer, we partially differentiate with respect to (w.r.t) $x_{jit}$:
    \begin{equation} \label{xjit}
        0 = p_{jt}L_{jt}^{1-\alpha}A_{jit}^{1-\alpha}\alpha x_{jit}^{\alpha-1}-p_{jit} \Rightarrow x_{jit}=\left(\frac{p_{jit}}{\alpha p_{jt}}\right)^\frac{1}{\alpha-1}L_{jt}A_{jit}
    \end{equation}
    
    This is an iso-elastic inverse demand curve, and therefore its price elasticity of (PED) is the power of the price of input $p_{jit}$, which is $1/(\alpha-1)$.
    
    According to the mark-up rule, the profit maximising price for the machine producer is a constant markup over marginal cost by a factor of:
    $$\frac{1}{1+1/PED}=\frac{1}{1+\alpha-1}=\frac{1}{\alpha}$$
    
    With the marginal cost being $\psi$, we have the profit maximising price $p_{jit}=\psi/\alpha$. Plugging this into (\ref{xjit}) gets us:
    \begin{equation*}
        x_{jit}=\left(\frac{p_{jit}}{\alpha p_{jt}}\right)^\frac{1}{\alpha-1}L_{jt}A_{jit}=\left(\frac{\alpha^2}{\psi}\right)^\frac{1}{1-\alpha}p_{jt}^\frac{1}{1-\alpha}L_{jt}A_{jt}
    \end{equation*}
\end{proof}

    Next, we calculate the expected profit of a scientist in sector $j$ at time $t$.
    
\begin{lemma}
    Let $\Pi_{jt}$ be the expected profit of a scientist in sector $j$ at time $t$. Then:
    \begin{equation} \label{pijt_3}
        \Pi_{jt}=\eta_j(1+\gamma I_{jt})(1-\alpha)\alpha^{\frac{1+\alpha}{1-\alpha}}\psi^{\frac{-\alpha}{1-\alpha}}p_{jt}^\frac{1}{1-\alpha}L_{jt} A_{jt-1}
    \end{equation}
\end{lemma}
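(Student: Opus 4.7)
The plan is to compute the expected profit of a scientist as (probability of success) $\times$ (monopoly profit the scientist earns from the improved machine), so the work reduces to first deriving the per-period monopoly profit of a machine producer and then handling the averaging over machines in the continuum.

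First I would compute the profit of the producer of machine $i$ in sector $j$ at time $t$. From the proof of Lemma 1 we already know $p_{jit}=\psi/\alpha$ and the demand $x_{jit}=(\alpha^2/\psi)^{1/(1-\alpha)}p_{jt}^{1/(1-\alpha)}L_{jt}A_{jit}$. Since the marginal cost is $\psi$, the instantaneous profit is
\begin{equation*}
\pi_{jit}=(p_{jit}-\psi)\,x_{jit}=\frac{\psi(1-\alpha)}{\alpha}\,x_{jit}.
\end{equation*}
Substituting $x_{jit}$ and collecting powers of $\alpha$ and $\psi$, the coefficient simplifies to $(1-\alpha)\alpha^{(1+\alpha)/(1-\alpha)}\psi^{-\alpha/(1-\alpha)}$, leaving $\pi_{jit}=(1-\alpha)\alpha^{(1+\alpha)/(1-\alpha)}\psi^{-\alpha/(1-\alpha)}p_{jt}^{1/(1-\alpha)}L_{jt}A_{jit}$. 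This is the routine algebraic step.

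Next I would translate this into the scientist's expected payoff. A scientist working in sector $j$ at time $t$ picks a machine line $i$ (uniformly at random, since scientists are anonymous in the continuum $[0,1]$) and succeeds with probability $\eta_j$; on success, the quality of that line jumps from $A_{jit-1}$ to $(1+\gamma I_{jt})A_{jit-1}$, and the successful scientist inherits the monopoly rent on that improved line for period $t$. Plugging the new quality into the $\pi_{jit}$ formula gives a payoff of $(1+\gamma I_{jt})A_{jit-1}$ multiplied by the constants computed above. Taking expectations over the random machine line replaces $A_{jit-1}$ with $\int_0^1 A_{jit-1}\,di=A_{jt-1}$ by equation (\ref{Ajt}), and multiplying by the success probability $\eta_j$ yields exactly
\begin{equation*}
\Pi_{jt}=\eta_j(1+\gamma I_{jt})(1-\alpha)\alpha^{(1+\alpha)/(1-\alpha)}\psi^{-\alpha/(1-\alpha)}p_{jt}^{1/(1-\alpha)}L_{jt}A_{jt-1}.
\end{equation*}

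The main obstacle is conceptual rather than computational: one must justify (i) that the scientist's reward equals the one-period monopoly profit on the improved line (as opposed to a discounted stream or a bargained share), and (ii) that the relevant quality entering the profit formula is the pre-innovation aggregate $A_{jt-1}$, not $A_{jt}$, because the innovation is measured relative to last period's frontier as encoded in (\ref{Ajt_2}). Once those modeling conventions are stated explicitly, the averaging step $\int_0^1 A_{jit-1}\,di=A_{jt-1}$ and the arithmetic of combining $(1+\gamma I_{jt})$ with the constants from Lemma 1 close out the proof.
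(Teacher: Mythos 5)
Your proposal is correct and follows essentially the same route as the paper: compute the machine producer's monopoly profit $\pi_{jit}=(p_{jit}-\psi)x_{jit}$ using the markup price $\psi/\alpha$ and the demand from Lemma 1, simplify the constants to $(1-\alpha)\alpha^{\frac{1+\alpha}{1-\alpha}}\psi^{\frac{-\alpha}{1-\alpha}}$, then weight by the success probability $\eta_j$ and the quality gain $(1+\gamma I_{jt})$ and integrate over $i$ to replace $A_{jit-1}$ by $A_{jt-1}$. Your explicit flagging of the modeling conventions (one-period rent on the improved line, and the use of the pre-innovation aggregate $A_{jt-1}$) is if anything more careful than the paper's own writeup, which leaves those points implicit.
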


\begin{proof}
    Since the machine producer sells $x_{jit}$ machines at the price of $p_{jit}$, and each of which costs $\psi$, its equilibrium profit is:
    \begin{multline}
        \pi_{jit}=(p_{jit}-\psi)x_{jit}=\left(\frac{\psi}{\alpha}-\psi\right)\left(\frac{\alpha^2}{\psi}\right)^\frac{1}{1-\alpha}p_{jt}^\frac{1}{1-\alpha}L_{jt}A_{jit}\\
        =(1-\alpha)\alpha^{\frac{2}{1-\alpha}-1}\psi^{1-\frac{1}{1-\alpha}}p_{jt}^\frac{1}{1-\alpha}L_{jt}A_{jit}=(1-\alpha)\alpha^{\frac{1+\alpha}{1-\alpha}}\psi^{\frac{-\alpha}{1-\alpha}}p_{jt}^\frac{1}{1-\alpha}L_{jt}A_{jit}
    \end{multline}
    
    To get the expected profit of a scientist doing research in sector $j$ at time $t$, we take into account the probability of success $\eta_j$ and the increase in innovation $(1+\gamma I_{jt})$ if successful. Using equation (\ref{Ajt}), we get:
    \begin{multline*}
        \Pi_{jt}=\int_0^1 \pi_{jit} \; di=\eta_j(1+\gamma I_{jt})(1-\alpha)\alpha^{\frac{1+\alpha}{1-\alpha}}\psi^{\frac{-\alpha}{1-\alpha}}p_{jt}^\frac{1}{1-\alpha}L_{jt}\int_0^1 A_{jit-1}\ di\\
        =\eta_j(1+\gamma I_{jt})(1-\alpha)\alpha^{\frac{1+\alpha}{1-\alpha}}\psi^{\frac{-\alpha}{1-\alpha}}p_{jt}^\frac{1}{1-\alpha}L_{jt} A_{jt-1}
    \end{multline*}
\end{proof}

    Lemma 2 shows that the expected profit of a scientist in the clean sector is boosted by and increases with the development of AI. Thus, AI development is important for scientific progress, particularly in the clean sector.

    The next lemma calculates the price of machines in terms of productivity in the sectors.
    
\begin{lemma}
    \begin{equation} \label{pnt}
        p_{ct}=\left(\frac{A_{ct}^{-\varphi}}{A_{ct}^{-\varphi}+A_{dt}^{-\varphi}}\right)^\frac{1}{1-\sigma}
    \end{equation}
    \begin{equation} \label{pot}
        p_{dt}=\left(\frac{A_{dt}^{-\varphi}}{A_{ct}^{-\varphi}+A_{dt}^{-\varphi}}\right)^\frac{1}{1-\sigma}
    \end{equation}
\end{lemma}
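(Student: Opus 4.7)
The plan is to derive a closed-form expression for $Y_{jt}$ using Lemma 1 and equation (\ref{Yjt}), then combine the labour-market first-order condition with wage equalisation across sectors to pin down $p_{ct}/p_{dt}$ purely in terms of $A_{ct}$ and $A_{dt}$, and finally to invoke the price normalisation (\ref{pjt_3}) to extract each $p_{jt}$ individually.

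First, I would substitute $x_{jit}$ from Lemma 1 into the production formula (\ref{Yjt}). The exponents $A_{jit}^{1-\alpha}$ and $A_{jit}^{\alpha}$ (coming from $x_{jit}^{\alpha}$) combine into $A_{jit}$, so that after integrating over $i$ and using (\ref{Ajt}), one obtains a tidy expression of the form $Y_{jt} = (\alpha^{2}/\psi)^{\alpha/(1-\alpha)}\, p_{jt}^{\alpha/(1-\alpha)} L_{jt} A_{jt}$. Next, differentiating the input producer's profit (\ref{profit}) with respect to $L_{jt}$ yields the wage equation $w_t = (1-\alpha) p_{jt} Y_{jt}/L_{jt}$; substituting the closed form for $Y_{jt}$ gives $w_t = (1-\alpha)(\alpha^{2}/\psi)^{\alpha/(1-\alpha)} p_{jt}^{1/(1-\alpha)} A_{jt}$, in which $L_{jt}$ has conveniently dropped out. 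Because labour is mobile between sectors, wage equalisation (Definition 2, condition 5) forces the two sector-specific expressions to coincide, so that $p_{ct}^{1/(1-\alpha)} A_{ct} = p_{dt}^{1/(1-\alpha)} A_{dt}$; raising to the power $1-\sigma$ and using $\varphi = (1-\alpha)(1-\sigma)$ yields $(p_{ct}/p_{dt})^{1-\sigma} = A_{ct}^{-\varphi}/A_{dt}^{-\varphi}$.

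Finally, equation (\ref{pjt_3}) rewrites as $p_{ct}^{1-\sigma} + p_{dt}^{1-\sigma} = 1$. Combining this with the ratio derived above gives the two-equation linear system whose unique solution is $p_{ct}^{1-\sigma} = A_{ct}^{-\varphi}/(A_{ct}^{-\varphi}+A_{dt}^{-\varphi})$ and symmetrically $p_{dt}^{1-\sigma} = A_{dt}^{-\varphi}/(A_{ct}^{-\varphi}+A_{dt}^{-\varphi})$; raising to the $1/(1-\sigma)$ power produces the formulas of Lemma 3.

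The only conceptually non-routine step is recognising that the labour-market clearing condition allows $L_{ct}$ and $L_{dt}$ to be eliminated from the price relation before any algebra involving (\ref{pjt}) is needed. An alternative route, which I would fall back on if wage equalisation feels insufficiently motivated, is to use (\ref{pjt}) together with the labour FOC to first derive $L_{ct}/L_{dt} = (p_{ct}/p_{dt})^{1-\sigma}$ and then substitute back into the ratio $Y_{ct}/Y_{dt}$; this is longer but reaches the same intermediate identity $(p_{ct}/p_{dt}) = (A_{dt}/A_{ct})^{1-\alpha}$. Either way, the remaining work is bookkeeping of exponents, and the factor $\varphi = (1-\alpha)(1-\sigma)$ appears naturally as $(1-\alpha)$ passes through the $(1-\sigma)$-th power coming from (\ref{pjt_3}).
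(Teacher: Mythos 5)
Your proposal is correct and follows essentially the same route as the paper: derive the labour first-order condition from (\ref{profit}), note that $L_{jt}$ drops out so wage equalisation across sectors yields $p_{ct}^{1/(1-\alpha)}A_{ct}=p_{dt}^{1/(1-\alpha)}A_{dt}$, raise to the power $1-\sigma$ to get the ratio $(p_{ct}/p_{dt})^{1-\sigma}=(A_{ct}/A_{dt})^{-\varphi}$, and close the system with the normalisation $p_{ct}^{1-\sigma}+p_{dt}^{1-\sigma}=1$. The only cosmetic difference is that you package the intermediate algebra as a closed form for $Y_{jt}$ before substituting into the wage equation, whereas the paper substitutes $x_{jit}$ directly into the FOC; the two are computationally identical.
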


\begin{proof}
    Partially differentiating (\ref{profit}) w.r.t $L_{jt}$, we get:
    \begin{multline}
        w_i=(1-\alpha)p_{jt}L_{jt}^{-\alpha}\int_0^1A_{jit}^{1-\alpha}x_{jit}^\alpha \; di=\left(\frac{\alpha^2}{\psi}\right)^\frac{1}{1-\alpha}(1-\alpha)p_{jt}L_{jt}^{-\alpha}\int_0^1A_{jit}^{1-\alpha}p_{jt}^\frac{\alpha}{1-\alpha}L_{jt}^\alpha A_{jit}^\alpha \; di\\
        =\left(\frac{\alpha^2}{\psi}\right)^\frac{1}{1-\alpha}(1-\alpha)p_{jt}^\frac{1}{1-\alpha}\int_0^1A_{jit} \; di=\left(\frac{\alpha^2}{\psi}\right)^\frac{1}{1-\alpha}(1-\alpha)p_{jt}^\frac{1}{1-\alpha}A_{jt}
    \end{multline}
    
    Since the wage $w_i$ applies to both sectors:
    \begin{equation} \label{pjt_2}
        p_{ct}^\frac{1}{1-\alpha}A_{ct}=p_{dt}^\frac{1}{1-\alpha}A_{dt}\Rightarrow \frac{p_{ct}}{p_{dt}}=\left(\frac{A_{ct}}{A_{dt}}\right)^{\alpha-1}
    \end{equation}
    \begin{equation*}
        \Rightarrow \frac{p_{ct}^{1-\sigma}}{p_{dt}^{1-\sigma}}=\left(\frac{A_{ct}}{A_{dt}}\right)^{-\varphi} \Rightarrow \frac{p_{ct}^{1-\sigma}}{p_{ct}^{1-\sigma} + p_{dt}^{1-\sigma}}=\frac{A_{ct}^{-\varphi}}{A_{ct}^{-\varphi}+A_{dt}^{-\varphi}}
    \end{equation*}
    
    Since (\ref{pjt_3}) implies that $p_{ct}^{1-\sigma} + p_{dt}^{1-\sigma}=1$, the above immediately implies (\ref{pnt}). A similar argument results in (\ref{pot})
\end{proof}

    We then calculate the distribution of labour between the two sectors.
    
\begin{lemma}
    \begin{equation} \label{Lnt}
        L_{ct}=\frac{A_{ct}^{-\varphi}}{A_{ct}^{-\varphi}+A_{dt}^{-\varphi}}
    \end{equation}
    \begin{equation} \label{Lot}
        L_{dt}=\frac{A_{dt}^{-\varphi}}{A_{ct}^{-\varphi}+A_{dt}^{-\varphi}}
    \end{equation}
\end{lemma}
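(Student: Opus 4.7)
The plan is to reduce the labour allocation to a system determined by two ratios: the ratio $Y_{ct}/Y_{dt}$ of sectoral outputs and the ratio $p_{ct}/p_{dt}$ of sectoral prices, both of which can be expressed in terms of $A_{ct}/A_{dt}$ using earlier lemmas. I would proceed by substituting the machine demand formula from Lemma 1 into the input production function (\ref{Yjt}), which collapses the integral nicely because the $A_{jit}^{1-\alpha}$ factor in $Y_{jt}$ combines with $A_{jit}^\alpha$ arising from $x_{jit}^\alpha$ to give $\int_0^1 A_{jit}\,di = A_{jt}$. The resulting reduced-form expression is of the shape
\begin{equation*}
    Y_{jt}=\left(\frac{\alpha^2}{\psi}\right)^{\frac{\alpha}{1-\alpha}} p_{jt}^{\frac{\alpha}{1-\alpha}} L_{jt}\, A_{jt}.
\end{equation*}

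Next I would take the ratio of $Y_{ct}$ to $Y_{dt}$ and equate it with the expression obtained from (\ref{pjt}), namely $Y_{ct}/Y_{dt}=(p_{ct}/p_{dt})^{-\sigma}$. Solving this for $L_{ct}/L_{dt}$ gives
\begin{equation*}
    \frac{L_{ct}}{L_{dt}}=\left(\frac{p_{ct}}{p_{dt}}\right)^{-\sigma-\frac{\alpha}{1-\alpha}}\frac{A_{dt}}{A_{ct}}.
\end{equation*}
At this point I would invoke the wage-equalisation identity $p_{ct}/p_{dt}=(A_{ct}/A_{dt})^{\alpha-1}$ derived in the proof of Lemma 3 to replace prices by productivities. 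The exponent arithmetic is the step that requires the most care: it needs to telescope to exactly $-\varphi$, because $(\alpha-1)\bigl(-\sigma-\tfrac{\alpha}{1-\alpha}\bigr)-1=(1-\alpha)(\sigma-1)=-\varphi$.

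Finally, I would invoke the labour market clearing condition (\ref{Ljt_2}), arguing that it must bind with equality in equilibrium (otherwise wages would be zero and one could profitably hire more labour in either sector), so that $L_{ct}+L_{dt}=1$. Combined with the ratio $L_{ct}/L_{dt}=A_{ct}^{-\varphi}/A_{dt}^{-\varphi}$ from the previous step, this immediately yields the stated formulas (\ref{Lnt}) and (\ref{Lot}).

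The main obstacle is the exponent bookkeeping in the step that eliminates $p_{ct}/p_{dt}$: one must combine the exponent $-\sigma-\alpha/(1-\alpha)$ coming from the $Y_{jt}$ formula with the exponent $\alpha-1$ from the wage equalisation relation, and then further combine with the lingering $A_{dt}/A_{ct}$ factor, to get a single clean power $-\varphi=-(1-\alpha)(1-\sigma)$. Everything else is essentially substitution and the market-clearing closure.
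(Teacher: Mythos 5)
Your proposal is correct and follows essentially the same route as the paper: it derives the reduced-form $Y_{jt}=(\alpha^2/\psi)^{\alpha/(1-\alpha)}p_{jt}^{\alpha/(1-\alpha)}L_{jt}A_{jt}$ from Lemma 1, combines the resulting ratio with $Y_{ct}/Y_{dt}=(p_{ct}/p_{dt})^{-\sigma}$ and the wage-equalisation relation $p_{ct}/p_{dt}=(A_{ct}/A_{dt})^{\alpha-1}$ to get $L_{ct}/L_{dt}=(A_{ct}/A_{dt})^{-\varphi}$, and closes with labour market clearing, with the exponent arithmetic checking out. The only (harmless) difference is that you explicitly justify why the labour constraint binds with equality, which the paper leaves implicit.
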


\begin{proof}
    Using (\ref{pjt}) and (\ref{pjt_2})
    \begin{multline} \label{Ljt}
        \frac{L_{ct}}{L_{dt}}=\left(\frac{p_{ct}}{p_{dt}}\right)^{-\alpha/(1-\alpha)}\left(\frac{A_{ct}}{A_{dt}}\right)^{-1}\frac{Y_{ct}}{Y_{dt}}=\left(\frac{p_{ct}}{p_{dt}}\right)^{-\alpha/(1-\alpha)}\left(\frac{A_{ct}}{A_{dt}}\right)^{-1}\left(\frac{p_{ct}}{p_{dt}}\right)^{-\sigma}\\
        =\left(\frac{A_{ct}}{A_{dt}}\right)^\alpha\left(\frac{A_{ct}}{A_{dt}}\right)^{-1}\left(\frac{A_{ct}}{A_{dt}}\right)^{-\sigma(\alpha-1)}=\left(\frac{A_{ct}}{A_{dt}}\right)^{\alpha-1-\sigma\alpha+\sigma}=\left(\frac{A_{ct}}{A_{dt}}\right)^{-(1-\alpha)(1-\sigma)}=\left(\frac{A_{ct}}{A_{dt}}\right)^{-\varphi}
    \end{multline}
    
    Then, using (\ref{Ljt_2}):
    \begin{equation*}
        L_{ct}=\frac{L_{ct}}{L_{ct}+L_{dt}}=\frac{A_{ct}^{-\varphi}}{A_{ct}^{-\varphi}+A_{dt}^{-\varphi}}
    \end{equation*}
    
    giving (\ref{Lnt}). A similar argument gives (\ref{Lot}).
\end{proof}

    Noting that $-\varphi>0$, this lemma shows that labour will be distributed more towards the sector with a higher productivity. Research to increase productivity thus also has the effect of attracting labour for production.

    The following lemma calculates the production of clean and dirty inputs, in terms of productivity.
    
\begin{lemma}
    \begin{equation} \label{Ynt}
        Y_{ct}=\left(\frac{\alpha^2}{\psi}\right)^\frac{\alpha}{1-\alpha}\left(A_{ct}^\varphi+A_{dt}^\varphi\right)^{-\frac{\alpha+\varphi}{\varphi}}A_{ct}A_{dt}^{\alpha+\varphi}
    \end{equation}
    \begin{equation} \label{Yot}
        Y_{dt}=\left(\frac{\alpha^2}{\psi}\right)^\frac{\alpha}{1-\alpha}\left(A_{ct}^\varphi+A_{dt}^\varphi\right)^{-\frac{\alpha+\varphi}{\varphi}}A_{ct}^{\alpha+\varphi}A_{dt}
    \end{equation}
\end{lemma}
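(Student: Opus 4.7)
The plan is to derive $Y_{jt}$ by working directly from the definition in (\ref{Yjt}) and substituting in the equilibrium machine demand from Lemma 1, then eliminating $p_{jt}$ and $L_{jt}$ via Lemmas 3 and 4. The final step is a purely algebraic repackaging of the resulting expression into the form stated.

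First I would plug the equilibrium demand $x_{jit}=(\alpha^{2}/\psi)^{1/(1-\alpha)}\,p_{jt}^{1/(1-\alpha)}L_{jt}A_{jit}$ into
\begin{equation*}
    Y_{jt}=L_{jt}^{1-\alpha}\int_{0}^{1}A_{jit}^{1-\alpha}x_{jit}^{\alpha}\,di.
\end{equation*}
The factor $A_{jit}^{1-\alpha}\cdot A_{jit}^{\alpha}=A_{jit}$ collapses, the constants $(\alpha^{2}/\psi)^{\alpha/(1-\alpha)}$, $p_{jt}^{\alpha/(1-\alpha)}$ and $L_{jt}^{\alpha}$ pull out of the integral, and $\int_{0}^{1}A_{jit}\,di=A_{jt}$ by (\ref{Ajt}). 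The $L_{jt}^{1-\alpha}\cdot L_{jt}^{\alpha}$ combine to $L_{jt}$, giving the clean intermediate identity
\begin{equation*}
    Y_{jt}=\left(\frac{\alpha^{2}}{\psi}\right)^{\frac{\alpha}{1-\alpha}}p_{jt}^{\frac{\alpha}{1-\alpha}}L_{jt}A_{jt}.
\end{equation*}

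Next I would substitute the equilibrium prices from Lemma 3 and the labour shares from Lemma 4. For the clean sector this yields
\begin{equation*}
    p_{ct}^{\frac{\alpha}{1-\alpha}}L_{ct}=\left(\frac{A_{ct}^{-\varphi}}{A_{ct}^{-\varphi}+A_{dt}^{-\varphi}}\right)^{\frac{\alpha}{(1-\alpha)(1-\sigma)}}\cdot\frac{A_{ct}^{-\varphi}}{A_{ct}^{-\varphi}+A_{dt}^{-\varphi}},
\end{equation*}
where $(1-\alpha)(1-\sigma)=\varphi$, so the price exponent simplifies to $\alpha/\varphi$. Collecting the two fractions produces $A_{ct}^{-\alpha-\varphi}/(A_{ct}^{-\varphi}+A_{dt}^{-\varphi})^{(\alpha+\varphi)/\varphi}$, and multiplying by the remaining $A_{ct}$ gives $A_{ct}^{1-\alpha-\varphi}/(A_{ct}^{-\varphi}+A_{dt}^{-\varphi})^{(\alpha+\varphi)/\varphi}$.

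The main obstacle is converting the denominator into the form $(A_{ct}^{\varphi}+A_{dt}^{\varphi})^{-(\alpha+\varphi)/\varphi}$ that the statement demands; once that is done the numerator mass falls into place. I would use the elementary identity
\begin{equation*}
    A_{ct}^{-\varphi}+A_{dt}^{-\varphi}=\frac{A_{ct}^{\varphi}+A_{dt}^{\varphi}}{A_{ct}^{\varphi}A_{dt}^{\varphi}},
\end{equation*}
so raising both sides to the power $(\alpha+\varphi)/\varphi$ extracts an extra factor of $A_{ct}^{\alpha+\varphi}A_{dt}^{\alpha+\varphi}$ in the numerator. Combining with $A_{ct}^{1-\alpha-\varphi}$ leaves $A_{ct}\cdot A_{dt}^{\alpha+\varphi}$, producing (\ref{Ynt}) exactly. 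The argument for (\ref{Yot}) is the same with $c$ and $d$ swapped. Beyond this bookkeeping there is no conceptual difficulty, so the delicate part is simply keeping the sign conventions for $\varphi$ straight (recall $\varphi<0$ since $\sigma>1$) and tracking the exponent $(\alpha+\varphi)/\varphi$ through the manipulation.
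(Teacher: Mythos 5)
Your proposal is correct and follows essentially the same route as the paper: substitute the machine demand from Lemma 1 into (\ref{Yjt}) to obtain $Y_{jt}=(\alpha^{2}/\psi)^{\alpha/(1-\alpha)}p_{jt}^{\alpha/(1-\alpha)}L_{jt}A_{jt}$, then insert Lemmas 3 and 4 and simplify. Your identity $A_{ct}^{-\varphi}+A_{dt}^{-\varphi}=(A_{ct}^{\varphi}+A_{dt}^{\varphi})/(A_{ct}^{\varphi}A_{dt}^{\varphi})$ is just a restatement of the paper's multiply-by-$A_{ct}^{\varphi}A_{dt}^{\varphi}/A_{ct}^{\varphi}A_{dt}^{\varphi}$ step, and all exponents check out.
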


\begin{proof}
    Plugging (\ref{xjit_2}) into (\ref{Yjt}) gives:
    \begin{multline} \label{Yjt_2}
        Y_{jt}=L_{jt}^{1-\alpha}\int_0^1A_{jit}^{1-\alpha}x_{jit}^\alpha \; di=\left(\frac{\alpha^2}{\psi}\right)^\frac{\alpha}{1-\alpha}L_{jt}^{1-\alpha}\int_0^1A_{jit}^{1-\alpha}p_{jt}^\frac{\alpha}{1-\alpha}L_{jt}^\alpha A_{jit}^\alpha \; di\\
        =\left(\frac{\alpha^2}{\psi}\right)^\frac{\alpha}{1-\alpha}p_{jt}^\frac{\alpha}{1-\alpha}L_{jt}\int_0^1A_{jit} \; di=\left(\frac{\alpha^2}{\psi}\right)^\frac{\alpha}{1-\alpha}p_{jt}^\frac{\alpha}{1-\alpha}L_{jt}A_{jt}
    \end{multline}
    
    Then, plugging (\ref{pnt}) and (\ref{Lnt}) into (\ref{Yjt_2}):
    \begin{multline*}
        Y_{ct}=\left(\frac{\alpha^2}{\psi}\right)^\frac{\alpha}{1-\alpha} \left(\frac{A_{ct}^{-\varphi}}{A_{ct}^{-\varphi}+A_{dt}^{-\varphi}}\right)^{\frac{\alpha}{\varphi}} \frac{A_{ct}^{-\varphi}}{A_{ct}^{-\varphi}+A_{dt}^{-\varphi}} A_{ct} =\left(\frac{\alpha^2}{\psi}\right)^\frac{\alpha}{1-\alpha} \left(\frac{A_{ct}^\varphi A_{dt}^\varphi}{A_{ct}^\varphi A_{dt}^\varphi} \times \frac{A_{ct}^{-\varphi}}{A_{ct}^{-\varphi}+A_{dt}^{-\varphi}}\right)^{\frac{\alpha+\varphi}{\varphi}} A_{ct} \\
        = \left(\frac{\alpha^2}{\psi}\right)^\frac{\alpha}{1-\alpha}\left(\frac{A_{dt}^{\varphi}}{A_{ct}^{\varphi}+A_{dt}^{\varphi}}\right)^{\frac{\alpha+\varphi}{\varphi}} A_{ct} =  \left(\frac{\alpha^2}{\psi}\right)^\frac{\alpha}{1-\alpha}\left(A_{ct}^\varphi+A_{dt}^\varphi\right)^{-\frac{\alpha+\varphi}{\varphi}}A_{ct}A_{dt}^{\alpha+\varphi}
    \end{multline*}
    
    thus giving (\ref{Ynt}). A similar line of argument gives (\ref{Yot}).
\end{proof}

    We then calculate the production of the final good.
    
\begin{lemma}
    \begin{equation} \label{Yt_2}
        Y_t=\left(\frac{\alpha^2}{\psi}\right)^\frac{\alpha}{1-\alpha}\left(A_{ct}^\varphi+A_{dt}^\varphi\right)^{-\frac{1}{\varphi}}A_{ct}A_{dt}
    \end{equation}
\end{lemma}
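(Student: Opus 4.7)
The plan is to substitute the expressions for $Y_{ct}$ and $Y_{dt}$ from Lemma 5 directly into the aggregate production function (\ref{Yt}) and simplify. Since both $Y_{ct}$ and $Y_{dt}$ share the common prefactor $B_t := \left(\frac{\alpha^2}{\psi}\right)^{\alpha/(1-\alpha)}\bigl(A_{ct}^{\varphi}+A_{dt}^{\varphi}\bigr)^{-(\alpha+\varphi)/\varphi}$, and differ only in how $A_{ct}$ and $A_{dt}$ are raised, I would first pull $B_t$ outside the CES aggregator.

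First, I would write $Y_{jt}^{(\sigma-1)/\sigma}$ for each sector, factor out $B_t^{(\sigma-1)/\sigma}(A_{ct}A_{dt})^{(\sigma-1)/\sigma}$, and observe that
\begin{equation*}
Y_{ct}^{(\sigma-1)/\sigma}+Y_{dt}^{(\sigma-1)/\sigma}=B_t^{(\sigma-1)/\sigma}(A_{ct}A_{dt})^{(\sigma-1)/\sigma}\bigl(A_{ct}^{(\alpha+\varphi-1)(\sigma-1)/\sigma}+A_{dt}^{(\alpha+\varphi-1)(\sigma-1)/\sigma}\bigr).
\end{equation*}
The key algebraic identity that makes the statement clean is that $\alpha+\varphi-1=\alpha-1+(1-\alpha)(1-\sigma)=-\sigma(1-\alpha)$, so multiplying by $(\sigma-1)/\sigma$ yields exactly $-(1-\alpha)(\sigma-1)=\varphi$. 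Hence the bracketed sum collapses to $A_{ct}^{\varphi}+A_{dt}^{\varphi}$.

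Finally, I would raise the resulting expression to the power $\sigma/(\sigma-1)$ to recover $Y_t$, which gives a factor of $B_t\cdot A_{ct}A_{dt}\cdot\bigl(A_{ct}^{\varphi}+A_{dt}^{\varphi}\bigr)^{\sigma/(\sigma-1)}$. The two powers of $\bigl(A_{ct}^{\varphi}+A_{dt}^{\varphi}\bigr)$ combine with exponent $-\frac{\alpha+\varphi}{\varphi}+\frac{\sigma}{\sigma-1}$, and a short calculation using $\varphi=(1-\alpha)(1-\sigma)$ shows this simplifies to $-\frac{1}{\varphi}$ (since $\frac{1-\alpha}{\varphi}=\frac{1}{1-\sigma}=-\frac{1}{\sigma-1}$). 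Plugging this back gives exactly (\ref{Yt_2}).

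The only real obstacle is bookkeeping: it is easy to slip on signs when juggling $\varphi$, $(\sigma-1)/\sigma$, and $\sigma/(\sigma-1)$ simultaneously. Nothing conceptually new is required beyond the two identities $\alpha+\varphi-1=-\sigma(1-\alpha)$ and $\tfrac{1-\alpha}{\varphi}+\tfrac{1}{\sigma-1}=0$, both of which follow immediately from the definition $\varphi=(1-\alpha)(1-\sigma)$.
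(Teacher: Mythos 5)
Your proposal is correct and follows essentially the same route as the paper: substitute the sectoral outputs from Lemma 5 into the CES aggregator, factor out the common prefactor and $(A_{ct}A_{dt})^{(\sigma-1)/\sigma}$, use $(\alpha+\varphi-1)\tfrac{\sigma-1}{\sigma}=\varphi$ to collapse the bracketed sum to $A_{ct}^{\varphi}+A_{dt}^{\varphi}$, and combine the exponents via $\tfrac{\sigma}{\sigma-1}-\tfrac{\alpha+\varphi}{\varphi}=-\tfrac{1}{\varphi}$. Both key identities you cite are verified correctly, so nothing is missing.
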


\begin{proof}
    Using (\ref{Yt}), (\ref{Ynt}) and (\ref{Yot}):
    \begin{multline*}
        Y_t=\left(\frac{\alpha^2}{\psi}\right)^\frac{\alpha}{1-\alpha}\left\{\left[\left(A_{ct}^\varphi+A_{dt}^\varphi\right)^{-\frac{\alpha+\varphi}{\varphi}}A_{ct}A_{dt}^{\alpha+\varphi}\right]^\frac{\sigma-1}{\sigma}+\left[\left(A_{ct}^\varphi+A_{dt}^\varphi\right)^{-\frac{\alpha+\varphi}{\varphi}}A_{ct}A_{dt}^{\alpha+\varphi}\right]^\frac{\sigma-1}{\sigma}\right\}^\frac{\sigma}{\sigma-1}\\
        = \left(\frac{\alpha^2}{\psi}\right)^\frac{\alpha}{1-\alpha}\left(A_{ct}^\varphi+A_{dt}^\varphi\right)^{-\frac{\alpha+\varphi}{\varphi}}A_{ct}A_{dt}\left(A_{ct}^{(\alpha+\varphi-1)\frac{\sigma-1}{\sigma}}+A_{dt}^{(\alpha+\varphi-1)\frac{\sigma-1}{\sigma}}\right)^\frac{\sigma}{\sigma-1}\\
        = \left(\frac{\alpha^2}{\psi}\right)^\frac{\alpha}{1-\alpha}\left(A_{ct}^\varphi+A_{dt}^\varphi\right)^{-\frac{\alpha+\varphi}{\varphi}}A_{ct}A_{dt}\left(A_{ct}^\varphi+A_{dt}^\varphi\right)^\frac{\sigma}{\sigma-1} =\left(\frac{\alpha^2}{\psi}\right)^\frac{\alpha}{1-\alpha} \left(A_{ct}^\varphi+A_{dt}^\varphi\right)^{\frac{\sigma}{\sigma-1}-\frac{\alpha+\varphi}{\varphi}}A_{ct}A_{dt}\\
        = \left(\frac{\alpha^2}{\psi}\right)^\frac{\alpha}{1-\alpha}\left(A_{ct}^\varphi+A_{dt}^\varphi\right)^{-\frac{1}{\varphi}}A_{ct}A_{dt}
    \end{multline*}
\end{proof}

    Next, we calculate the total consumption by consumers.
    
\begin{lemma}
    \begin{equation}
        C_t=\left(\frac{\alpha^2}{\psi}\right)^\frac{\alpha}{1-\alpha}(1-\alpha^2)\left(A_{ct}^\varphi+A_{dt}^\varphi\right)^{-\frac{1}{\varphi}}A_{ct}A_{dt}
    \end{equation}
\end{lemma}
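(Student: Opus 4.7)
The plan is to start from the market-clearing identity for the final good, equation (\ref{nt}), and show that the total machine expenditure is a constant fraction of $Y_t$; the result then follows from Lemma 6 in one line.

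First, I would sum the demand for machines within each sector. By Lemma 1 (read with $A_{jit}$ in place of $A_{jt}$ inside the integral, as is done in the proof), each $x_{jit}$ splits into a sector-wide factor $(\alpha^2/\psi)^{1/(1-\alpha)} p_{jt}^{1/(1-\alpha)} L_{jt}$ times $A_{jit}$, so integrating over $i$ and using definition (\ref{Ajt}) gives $\int_0^1 x_{jit}\, di = (\alpha^2/\psi)^{1/(1-\alpha)} p_{jt}^{1/(1-\alpha)} L_{jt} A_{jt}$. Comparing this with equation (\ref{Yjt_2}) from the proof of Lemma 5, $Y_{jt} = (\alpha^2/\psi)^{\alpha/(1-\alpha)} p_{jt}^{\alpha/(1-\alpha)} L_{jt} A_{jt}$, I get the clean relation $\psi \int_0^1 x_{jit}\, di = \alpha^2\, p_{jt} Y_{jt}$ for each $j\in\{c,d\}$.

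Second, I would establish the CES zero-profit identity $p_{ct}Y_{ct} + p_{dt}Y_{dt} = Y_t$. Because the final good price is normalised to one by (\ref{pjt_3}), this is just the standard envelope/zero-profit condition for the CES aggregator, but to keep the argument self-contained I would verify it algebraically: substitute Lemma 3 and Lemma 5, use $\varphi=(1-\alpha)(1-\sigma)$ to rewrite $1/(1-\sigma)=(1-\alpha)/\varphi$, and apply the rearrangement $A_{jt}^{-\varphi}/(A_{ct}^{-\varphi}+A_{dt}^{-\varphi}) = A_{kt}^{\varphi}/(A_{ct}^{\varphi}+A_{dt}^{\varphi})$ for $\{j,k\}=\{c,d\}$. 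After combining exponents, $p_{ct}Y_{ct}$ and $p_{dt}Y_{dt}$ each acquire the factor $(A_{ct}^{\varphi}+A_{dt}^{\varphi})^{-(1+\varphi)/\varphi}$ and the symmetric sum $A_{ct}A_{dt}^{1+\varphi}+A_{ct}^{1+\varphi}A_{dt}=A_{ct}A_{dt}(A_{ct}^{\varphi}+A_{dt}^{\varphi})$ absorbs one power of $(A_{ct}^{\varphi}+A_{dt}^{\varphi})$, leaving exactly the expression for $Y_t$ in Lemma 6.

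Combining the two steps in (\ref{nt}) gives $C_t = Y_t - \alpha^2(p_{ct}Y_{ct}+p_{dt}Y_{dt}) = (1-\alpha^2)\,Y_t$, and substituting the closed form of $Y_t$ from Lemma 6 yields the advertised formula. The main obstacle is the bookkeeping in the second step: the exponent $-(\alpha+\varphi)/\varphi$ inside $Y_{jt}$ must be combined with $1/(1-\sigma)$ from $p_{jt}$ to collapse to $-(1+\varphi)/\varphi$, and the symmetric $A$-sum must cancel one power of $(A_{ct}^{\varphi}+A_{dt}^{\varphi})$ to land on the exponent $-1/\varphi$ of Lemma 6. Everything else is a direct substitution.
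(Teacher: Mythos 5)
Your proposal is correct, and it reaches the result with the same ingredients as the paper (market clearing (\ref{nt}), the machine demand of Lemma 1, and the closed forms of $p_{jt}$, $L_{jt}$, $Y_{jt}$, $Y_t$), but it is organized differently. The paper substitutes the closed-form expressions for $p_{jt}^{1/(1-\alpha)}$ and $L_{jt}$ directly into $\psi\int_0^1 x_{jit}\,di$ and grinds through the $A$-algebra until the common factor $\left(A_{ct}^\varphi+A_{dt}^\varphi\right)^{-1/\varphi}A_{ct}A_{dt}$ can be pulled out, never isolating an intermediate economic identity. You instead route through two clean identities: $\psi\int_0^1 x_{jit}\,di=\alpha^2 p_{jt}Y_{jt}$ (machine expenditure is the fraction $\alpha^2$ of sectoral revenue, which follows immediately from comparing the exponents in Lemma 1 and (\ref{Yjt_2}) since $1+\alpha/(1-\alpha)=1/(1-\alpha)$), and the CES adding-up condition $p_{ct}Y_{ct}+p_{dt}Y_{dt}=Y_t$ under the normalisation (\ref{pjt_3}), giving $C_t=(1-\alpha^2)Y_t$ before any $A$'s appear. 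The algebra you defer to verifying the adding-up condition is essentially the same bookkeeping the paper performs inline, so the total work is comparable; what your version buys is that the factor $(1-\alpha^2)$ is explained economically (revenue net of machine costs) rather than emerging from cancellation, and the identity $C_t=(1-\alpha^2)Y_t$ is reusable. Your exponent bookkeeping in the second step checks out: $p_{jt}Y_{jt}$ carries $(A_{ct}^\varphi+A_{dt}^\varphi)^{-(1+\varphi)/\varphi}$ and the symmetric sum $A_{ct}A_{dt}^{1+\varphi}+A_{ct}^{1+\varphi}A_{dt}$ absorbs one power of the bracket, landing on Lemma 6 exactly as claimed.
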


\begin{proof}
    Using (\ref{nt}), (\ref{xjit_2}) and (\ref{Yt_2}):
    \begin{multline*}
        C_t=\left(\frac{\alpha^2}{\psi}\right)^\frac{\alpha}{1-\alpha}\left(A_{ct}^\varphi+A_{dt}^\varphi\right)^{-\frac{1}{\varphi}}A_{ct}A_{dt} - \psi\left(\frac{\alpha^2}{\psi}\right)^\frac{1}{1-\alpha}p_{ct}^\frac{1}{1-\alpha}L_{ct}A_{ct} - \psi\left(\frac{\alpha^2}{\psi}\right)^\frac{1}{1-\alpha}p_{dt}^\frac{1}{1-\alpha}L_{dt}A_{dt}\\
        = \left(\frac{\alpha^2}{\psi}\right)^\frac{\alpha}{1-\alpha}\Biggl[\left(A_{ct}^\varphi+A_{dt}^\varphi\right)^{-\frac{1}{\varphi}}A_{ct}A_{dt} - \alpha^2\left(\frac{A_{ct}^{-\varphi}}{A_{ct}^{-\varphi}+A_{dt}^{-\varphi}}\right)^\frac{\varphi+1}{\varphi}A_{ct} - \alpha^2\left(\frac{A_{dt}^{-\varphi}}{A_{ct}^{-\varphi}+A_{dt}^{-\varphi}}\right)^\frac{\varphi+1}{\varphi}A_{dt}\Biggl]\\
        = \left(\frac{\alpha^2}{\psi}\right)^\frac{\alpha}{1-\alpha}\Biggl[\left(A_{ct}^\varphi+A_{dt}^\varphi\right)^{-\frac{1}{\varphi}}A_{ct}A_{dt} - \alpha^2\left(\frac{A_{dt}^{\varphi}}{A_{ct}^{\varphi}+A_{dt}^{\varphi}}\right)^\frac{\varphi+1}{\varphi}A_{ct} - \alpha^2\left(\frac{A_{ct}^{\varphi}}{A_{ct}^{\varphi}+A_{dt}^{\varphi}}\right)^\frac{\varphi+1}{\varphi}A_{dt}\Biggl]\\
        = \left(\frac{\alpha^2}{\psi}\right)^\frac{\alpha}{1-\alpha}\Biggl[\left(A_{ct}^\varphi+A_{dt}^\varphi\right)^{-\frac{1}{\varphi}}A_{ct}A_{dt} - \alpha^2\left(A_{ct}^\varphi+A_{dt}^\varphi\right)^{-1-\frac{1}{\varphi}}A_{dt}^{\varphi+1}A_{ct} - \alpha^2\left(A_{ct}^\varphi+A_{dt}^\varphi\right)^{-1-\frac{1}{\varphi}}A_{ct}^{\varphi+1}A_{dt}\Biggl]\\
        = \left(\frac{\alpha^2}{\psi}\right)^\frac{\alpha}{1-\alpha}\left(A_{ct}^\varphi+A_{dt}^\varphi\right)^{-\frac{1}{\varphi}}A_{ct}A_{dt}\Biggl[1 - \alpha^2\left(A_{ct}^\varphi+A_{dt}^\varphi\right)^{-1}A_{dt}^{\varphi} - \alpha^2\left(A_{ct}^\varphi+A_{dt}^\varphi\right)^{-1}A_{ct}^{\varphi}\Biggl]\\
        = \left(\frac{\alpha^2}{\psi}\right)^\frac{\alpha}{1-\alpha}(1-\alpha^2)\left(A_{ct}^\varphi+A_{dt}^\varphi\right)^{-\frac{1}{\varphi}}A_{ct}A_{dt}
    \end{multline*}
\end{proof}

\subsection{Innovation in Equilibrium}

Finally, we used the lemmas above to calculate the ratio between expected profits of scientists in the clean and dirty sectors.
    
\begin{lemma}
    \begin{equation} \label{pijt}
        \frac{\Pi_{ct}}{\Pi_{dt}}=\frac{\eta_c}{\eta_d}\times \frac{1+\gamma I_{ct}}{1+\gamma I_{dt}}\times \left(\frac{1+\gamma \eta_c I_{ct} s_{ct}}{1+\gamma \eta_d I_{dt} s_{dt}}\right)^{-\varphi-1}\times \left(\frac{A_{ct-1}}{A_{dt-1}}\right)^{-\varphi}
    \end{equation}
\end{lemma}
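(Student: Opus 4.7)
The plan is to take the closed-form expression for $\Pi_{jt}$ given by Lemma 2, form the ratio for $j=c$ over $j=d$ so that all the constants ($1-\alpha$, $\alpha^{(1+\alpha)/(1-\alpha)}$, $\psi^{-\alpha/(1-\alpha)}$) cancel, and then eliminate the price ratio and labour ratio using Lemmas 3 and 4. The only substantive step beyond that is pushing $A_{jt}$ back to $A_{jt-1}$ via the innovation law of motion (\ref{Ajt_2}).

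First, I would write
\begin{equation*}
\frac{\Pi_{ct}}{\Pi_{dt}}=\frac{\eta_c(1+\gamma I_{ct})}{\eta_d(1+\gamma I_{dt})}\left(\frac{p_{ct}}{p_{dt}}\right)^{\!\frac{1}{1-\alpha}}\frac{L_{ct}}{L_{dt}}\frac{A_{ct-1}}{A_{dt-1}}.
\end{equation*}
From (\ref{pjt_2}) we have $p_{ct}/p_{dt}=(A_{ct}/A_{dt})^{\alpha-1}$, so the price factor becomes $(A_{ct}/A_{dt})^{-1}$. From Lemma 4 (or directly from (\ref{Ljt})), $L_{ct}/L_{dt}=(A_{ct}/A_{dt})^{-\varphi}$. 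Substituting gives
\begin{equation*}
\frac{\Pi_{ct}}{\Pi_{dt}}=\frac{\eta_c(1+\gamma I_{ct})}{\eta_d(1+\gamma I_{dt})}\left(\frac{A_{ct}}{A_{dt}}\right)^{-\varphi-1}\frac{A_{ct-1}}{A_{dt-1}}.
\end{equation*}

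The final step is to convert current productivity into lagged productivity. Using (\ref{Ajt_2}),
\begin{equation*}
\frac{A_{ct}}{A_{dt}}=\frac{1+\gamma\eta_c I_{ct}s_{ct}}{1+\gamma\eta_d I_{dt}s_{dt}}\cdot\frac{A_{ct-1}}{A_{dt-1}},
\end{equation*}
so raising to the $(-\varphi-1)$ power and multiplying by $A_{ct-1}/A_{dt-1}$ produces the innovation wedge $\bigl((1+\gamma\eta_c I_{ct}s_{ct})/(1+\gamma\eta_d I_{dt}s_{dt})\bigr)^{-\varphi-1}$ together with the remaining exponent $-\varphi-1+1=-\varphi$ on $A_{ct-1}/A_{dt-1}$, exactly matching (\ref{pijt}).

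There is no real obstacle here; the argument is essentially algebraic bookkeeping. The only place one has to be careful is tracking the exponent on $A_{ct}/A_{dt}$: the $-1$ comes from the price factor together with the $1/(1-\alpha)$ exponent, the $-\varphi$ from the labour allocation, and these must combine cleanly before the lagged-productivity substitution so that the leftover exponent on $A_{ct-1}/A_{dt-1}$ is $-\varphi$ rather than $-\varphi-1$ or $1$.
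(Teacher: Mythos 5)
Your proposal is correct and follows essentially the same route as the paper: form the ratio from Lemma 2 so the constants cancel, substitute $(p_{ct}/p_{dt})^{1/(1-\alpha)}=(A_{ct}/A_{dt})^{-1}$ from (\ref{pjt_2}) and $L_{ct}/L_{dt}=(A_{ct}/A_{dt})^{-\varphi}$ from (\ref{Ljt}), then apply the law of motion (\ref{Ajt_2}) to pass from $A_{jt}$ to $A_{jt-1}$, leaving the exponent $-\varphi$. The exponent bookkeeping you flag is exactly the step the paper carries out, and it checks out.
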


\begin{proof}
    From (\ref{pijt_3}), we have:
    \begin{equation}
        \frac{\Pi_{ct}}{\Pi_{dt}}=\frac{\eta_c}{\eta_d}\times \frac{1+\gamma I_{ct}}{1+\gamma I_{dt}} \times  \left(\frac{p_{ct}}{p_{dt}}\right)^\frac{1}{1-\alpha}\times \frac{L_{ct}}{L_{dt}}\times \frac{A_{ct-1}}{A_{dt-1}}
    \end{equation}
    
    Plugging (\ref{pjt_2}) and (\ref{Ljt}) into the above:
    \begin{multline*}
        \frac{\Pi_{ct}}{\Pi_{dt}}=\frac{\eta_c}{\eta_d}\times \frac{1+\gamma I_{ct}}{1+\gamma I_{dt}}\times \left(\frac{A_{ct}}{A_{dt}}\right)^{-1}\times \left(\frac{A_{ct}}{A_{dt}}\right)^{-\varphi}\times \frac{A_{ct-1}}{A_{dt-1}}=\frac{\eta_c}{\eta_d}\times \frac{1+\gamma I_{ct}}{1+\gamma I_{dt}}\times \left(\frac{A_{ct}}{A_{dt}}\right)^{-\varphi-1}\times \frac{A_{ct-1}}{A_{dt-1}}\\
        =\frac{\eta_c}{\eta_d}\times \frac{1+\gamma I_{ct}}{1+\gamma I_{dt}}\times \left(\frac{1+\gamma \eta_c I_{ct} s_{ct}}{1+\gamma \eta_d I_{dt} s_{dt}}\right)^{-\varphi-1}\times \left(\frac{A_{ct-1}}{A_{dt-1}}\right)^{-\varphi-1}\times \frac{A_{ct-1}}{A_{dt-1}}\\
        =\frac{\eta_c}{\eta_d}\times \frac{1+\gamma I_{ct}}{1+\gamma I_{dt}}\times \left(\frac{1+\gamma \eta_c I_{ct} s_{ct}}{1+\gamma \eta_d I_{dt} s_{dt}}\right)^{-\varphi-1}\times \left(\frac{A_{ct-1}}{A_{dt-1}}\right)^{-\varphi}
    \end{multline*}
\end{proof}

In equilibrium, scientists conduct research in the sector with higher expected profit. The above lemma thus gives rise to the following important theorem on the allocation of scientists.
    
\begin{theorem}
    At equilibrium, innovation happens in:
    \begin{itemize}
        \item only the clean sector if and only if
        \begin{equation} \label{clean}
            \frac{\eta_c}{\eta_d}\times \frac{1+\gamma I_{ct}}{1+\gamma I_{dt}}\times \left(1+\gamma \eta_c I_{ct}\right)^{-\varphi-1}\times \left(\frac{A_{ct-1}}{A_{dt-1}}\right)^{-\varphi}\geq 1
        \end{equation}
        \item only the dirty sector if and only if
        \begin{equation} \label{dirty}
            \frac{\eta_c}{\eta_d}\times \frac{1+\gamma I_{ct}}{1+\gamma I_{dt}}\times \left(1+\gamma \eta_d I_{dt} \right)^{\varphi+1}\times \left(\frac{A_{ct-1}}{A_{dt-1}}\right)^{-\varphi}\leq 1
        \end{equation}
        \item both sectors if and only if
        \begin{equation} \label{both}
            \frac{\eta_c}{\eta_d}\times \frac{1+\gamma I_{ct}}{1+\gamma I_{dt}}\times \left(\frac{1+\gamma \eta_c I_{ct} s_{ct}}{1+\gamma \eta_d I_{dt} s_{dt}}\right)^{-\varphi-1}\times \left(\frac{A_{ct-1}}{A_{dt-1}}\right)^{-\varphi}=1
        \end{equation}
        for some $s_{ct}+s_{dt}=1$
    \end{itemize}
\end{theorem}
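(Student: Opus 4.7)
The plan is to combine Lemma 7, which expresses $\Pi_{ct}/\Pi_{dt}$ as an explicit function of the allocation $(s_{ct}, s_{dt})$ and the lagged productivity ratio, with the equilibrium condition from Definition 2 that each scientist goes to the sector with the higher expected profit. The key observation is that the scientist allocation game is essentially ``go where profit is higher'': if $\Pi_{ct} > \Pi_{dt}$ at the candidate allocation, every scientist strictly prefers clean, so any equilibrium must have $s_{dt} = 0$; symmetrically for the dirty side; and if $\Pi_{ct} = \Pi_{dt}$, then any split with $s_{ct} + s_{dt} = 1$ is consistent with individual optimisation. Each of the three bullets in the theorem then corresponds to one of these three regimes.

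Working through the cases, I would plug the corresponding corner values into the formula of Lemma 7. For the clean-only case, setting $s_{ct} = 1,\ s_{dt} = 0$ collapses the factor $1 + \gamma \eta_d I_{dt} s_{dt}$ to $1$, so $\Pi_{ct}/\Pi_{dt}$ becomes exactly the left-hand side of (\ref{clean}); requiring this to be at least $1$ is precisely the no-deviation condition, and conversely, when the inequality holds, no single scientist benefits from switching to dirty, giving the ``if and only if''. For the dirty-only case, setting $s_{ct} = 0,\ s_{dt} = 1$ collapses the inner numerator to $1$, and the factor $\bigl((1+\gamma\eta_c I_{ct} s_{ct})/(1+\gamma\eta_d I_{dt} s_{dt})\bigr)^{-\varphi-1}$ flips to $(1+\gamma\eta_d I_{dt})^{\varphi+1}$; imposing $\Pi_{ct}/\Pi_{dt} \leq 1$ then yields (\ref{dirty}). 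For the mixed case, the indifference $\Pi_{ct} = \Pi_{dt}$ at some interior split $s_{ct}+s_{dt} = 1$ translates directly into (\ref{both}).

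Since Lemma 7 has already absorbed the entire production-side machinery (prices, labour shares, machine demand, and the law of motion for $A_{jt}$) into the ratio of expected profits, there is no serious analytical obstacle left; the theorem is fundamentally a three-case repackaging of Lemma 7 under the three regimes of the scientist-allocation game. The only items requiring care are the exponent flip $\bigl(1/(\cdot)\bigr)^{-\varphi-1} = (\cdot)^{\varphi+1}$ in the dirty-only case, and verifying that both directions of each ``if and only if'' follow from the same single-scientist deviation argument applied at the relevant corner of the simplex $\{(s_{ct},s_{dt}) : s_{ct}+s_{dt}=1,\ s_{ct},s_{dt}\geq 0\}$.
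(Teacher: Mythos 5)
Your proposal is correct and follows essentially the same route as the paper: evaluate the profit ratio from Lemma 7 at the corner allocations (and at an indifference point for the interior case), then impose the equilibrium condition that scientists enter the sector with the higher expected profit, noting the exponent flip in the dirty-only corner. The paper's only addition is an explicit continuity-and-monotonicity argument for $f(s)=\Pi_{ct}/\Pi_{dt}$ as a function of $s=s_{ct}$ to justify the existence of the interior indifference point when $f(1)<1<f(0)$, which your version leaves implicit.
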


\begin{proof}
    We recall that in the free market, scientists will research in the sector that brings more expected profit. Now, after fixing $t$, we define:
    \begin{equation}
        f(s)=\frac{\eta_c}{\eta_d}\times \frac{1+\gamma I_{ct}}{1+\gamma I_{dt}}\times \left(\frac{1+\gamma \eta_c I_{ct} s}{1+\gamma \eta_c I_{dt} (1-s)}\right)^{-\varphi-1}\times \left(\frac{A_{ct-1}}{A_{dt-1}}\right)^{-\varphi}
    \end{equation}
    
    Then we can rewrite (\ref{pijt}) as $\Pi_{ct}/\Pi_{dt}=f(s_{ct})$.
    
    Firstly, $f(1)\geq 1$ is equivalent to $s=1$ being the equilibrium, which means innovation occurs in only the clean sector if $f(1)\geq 1$. This yields (\ref{clean}).
    
    Next, $f(0)\leq 1$ is equivalent to $s=0$ being the equilibrium, meaning innovation only occurs in the dirty sector. This yields (\ref{dirty}).
    
    Now assume we have $f(1)< 1 < f(0)$. Note that $f(s)$ is continuous, and depending on the value of $1+\varphi$, $f(s)$ is strictly increasing, strictly decreasing, or is a constant. The inequality implies that $f(s)$ is strictly decreasing, which means there exists a unique $s'\in(0,1)$ so that $f(s')=1$. This means innovation happens in both sectors if $f(s_{ct})=1$ for some $s_{ct}\in [0,1]$. Conversely, if innovation occurs in both sector, then there must exist $s_{ct}+s_{dt}=1$ so that $f(s_{ct})=1$. Hence we proved both directions, yielding (\ref{both}).
\end{proof}

\section{The Impact on The Environment}

We will show and give an intuition of why AI alone, without government intervention, may not be able to avert an environmental disaster. Only when AI is used in tandem with a sufficiently large but temporary government intervention will an environmental disaster be avoided. 

\subsection{Environmental Disaster Might Not be Averted}
Without AI, Acemoglu et al. (2012) shows that with clean technology being sufficiently backwards compared to dirty technology and $\sigma>1$, an environmental disaster necessarily occurs under economic equilibrium without government intervention. We will show that with AI, the situation is not so different:

\begin{theorem}
    If $s_{cT}=0$ for any $T$ such that:
    $$T \geq \frac{1}{k}\ln\frac{e^{-\frac{4k}{\varphi}}-1}{\gamma \eta_d}-1$$
    then $s_{ct}=0$ for all $t>T$.
\end{theorem}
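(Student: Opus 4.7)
My plan is to prove the statement by strong induction on $t>T$, maintaining at each step the criterion (\ref{dirty}) of Theorem 1. Writing $F_t$ for the left-hand side of (\ref{dirty}), that theorem gives $s_{ct}=0\iff F_t\le 1$, so the hypothesis $s_{cT}=0$ is the base case $F_T\le 1$. For the inductive step, assuming $s_{cs}=0$ for every $T\le s<t$, the update rule (\ref{Ajt_2}) gives $A_{c,t-1}=A_{c,T-1}$ and $A_{d,t-1}=A_{d,T-1}\prod_{r=T}^{t-1}(1+\gamma\eta_d I_{dr})$, which I would substitute into the definition of $F_t$ so as to estimate the ratio $F_t/F_T$.

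The factor $(A_{c,T-1}/A_{d,T-1})^{-\varphi}$ cancels, and after pulling the $r=T$ term out of the resulting product and splitting $(1+\gamma\eta_d I_{d,t})^{\varphi+1}=(1+\gamma\eta_d I_{d,t})\cdot(1+\gamma\eta_d I_{d,t})^{\varphi}$, I expect the ratio to simplify to
\begin{equation*}
\frac{F_t}{F_T}=\frac{(1+\gamma I_{ct})(1+\gamma I_{dT})}{(1+\gamma I_{cT})(1+\gamma I_{dt})}\cdot\frac{1+\gamma\eta_d I_{dt}}{1+\gamma\eta_d I_{dT}}\cdot\prod_{r=T+1}^{t}(1+\gamma\eta_d I_{dr})^{\varphi}.
\end{equation*}
I would then bound each piece with the elementary inequality $(1+ae^{y})/(1+ae^{x})\le e^{y-x}$ valid for $y\ge x$ and $a>0$: via (\ref{Ict}) the clean ratio contributes at most $e^{3k(t-T)}$, the ratio $(1+\gamma\eta_d I_{dt})/(1+\gamma\eta_d I_{dT})$ contributes at most $e^{k(t-T)}$, and $(1+\gamma I_{dT})/(1+\gamma I_{dt})\le 1$. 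The hypothesis on $T$ rearranges to $1+\gamma\eta_d I_{d,T+1}\ge e^{-4k/\varphi}$; since $\varphi<0$ and $I_{dr}$ is increasing in $r$, this forces $(1+\gamma\eta_d I_{dr})^{\varphi}\le e^{-4k}$ for every $r\ge T+1$, so the product is bounded by $e^{-4k(t-T)}$. Multiplying the bounds gives $F_t/F_T\le e^{(3k+k-4k)(t-T)}=1$, hence $F_t\le F_T\le 1$ and $s_{ct}=0$ by Theorem 1, closing the induction.

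The main obstacle is the algebraic regrouping that must be done before any inequality is invoked: only after extracting $(1+\gamma\eta_d I_{dT})^{\varphi}$ from the product and peeling a single factor of $(1+\gamma\eta_d I_{dt})$ off the $(\varphi+1)$-exponent do the ``growth'' ratios separate cleanly from the ``decay'' product, so that one can bound them independently without having to track the sign of $\varphi+1$. Once this is done, the numerology is transparent: the combined per-step growth rates $3k$ (clean AI) and $k$ (dirty AI inside the profit factor) total $4k$, which is exactly the amount of per-step decay guaranteed by $(1+\gamma\eta_d I_{dr})^\varphi\le e^{-4k}$, and this is in turn exactly what the threshold $T\ge \tfrac{1}{k}\ln\tfrac{e^{-4k/\varphi}-1}{\gamma\eta_d}-1$ is designed to secure.
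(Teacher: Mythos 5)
Your proposal is correct and is essentially the paper's own argument in telescoped form: the paper runs the same induction one step at a time, bounding the per-step ratio $R_{t+1}/R_t$ by $e^{3k}\cdot e^{-3k}=1$ using exactly the same ingredients (the elementary ratio inequality for the AI terms and the threshold condition forcing $(1+\gamma\eta_d I_{dr})^{\varphi}\le e^{-4k}$), whereas you multiply those per-step bounds into a single comparison $F_t/F_T\le e^{(3k+k-4k)(t-T)}=1$. The decomposition, the use of Theorem 1 in both directions, and the $3k+k=4k$ numerology all match the paper's proof.
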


\begin{proof}
We will use induction: Consider a sufficiently large time $t$ where $s_{ct}=0$. According to (\ref{dirty}), this means that:
\begin{equation*}
    \frac{\eta_c}{\eta_d}\times \frac{1+\gamma e^{3kt}}{1+\gamma e^{kt}}\times \left(1+\gamma \eta_d e^{kt} \right)^{\varphi+1}\times \left(\frac{A_{ct-1}}{A_{dt-1}}\right)^{-\varphi}\leq 1
\end{equation*}

Let the left hand side be $R_t$. Consider $R_{t+1}$, taking note that $s_{ct}=0$ and $s_{dt}=1$:
\begin{multline*}
    R_{t+1} = \frac{\eta_c}{\eta_d}\times \frac{1+\gamma e^{3k(t+1)}}{1+\gamma e^{k(t+1)}}\times \left(1+\gamma \eta_d e^{k(t+1)} \right)^{\varphi+1}\times \left(\frac{A_{ct}}{A_{dt}}\right)^{-\varphi} \\
    = \frac{\eta_c}{\eta_d}\times \frac{1+\gamma e^{3k(t+1)}}{1+\gamma e^{k(t+1)}}\times \left(1+\gamma \eta_d e^{k(t+1)} \right)^{\varphi+1}\times \left(1+\gamma \eta_d e^{kt}\right)^{\varphi} \times \left(\frac{A_{ct-1}}{A_{dt-1}}\right)^{-\varphi}
\end{multline*}

We then have the ratio:
\begin{equation} \label{rt}
    \frac{R_{t+1}}{R_t} = \frac{\left(1+\gamma e^{3k(t+1)}\right) \left(1+\gamma e^{kt}\right)}{\left(1+\gamma e^{k(t+1)}\right) \left(1+\gamma e^{3kt}\right)} \times \frac{\left(1+\gamma \eta_d e^{k(t+1)} \right)^{\varphi+1}}{1+\gamma \eta_d e^{kt}}
\end{equation}

At this point, we prove the following 2 lemmas:

\begin{lemma}\label{e^3k}
    $$\frac{\left(1+\gamma e^{3k(t+1)}\right) \left(1+\gamma e^{kt}\right)}{\left(1+\gamma e^{k(t+1)}\right) \left(1+\gamma e^{3kt}\right)} \leq e^{3k}$$
\end{lemma}

\begin{proof}
    The expression to be proven is equivalent with:
    $$\Leftrightarrow\ (1+\gamma e^{3k(t+1)})(1+\gamma e^{kt}) \leq (1+\gamma e^{k(t+1)})(1+\gamma e^{3kt})(e^{3k})$$
    $$\Leftrightarrow\ 1+\gamma e^{3kt+3k} + \gamma e^{kt} + \gamma^2 e^{4kt+3k} \leq e^{3k} + \gamma e^{3kt+3k} + \gamma e^{kt+4k} + \gamma^2 e^{4kt+7k}$$
    which is trivially true.
\end{proof}

\begin{lemma}\label{1/e^3k}
    If:
    $$t\geq \frac{1}{k}\ln\frac{e^{-\frac{4k}{\varphi}}-1}{\gamma \eta_d}-1$$
    then:
    $$\frac{\left(1+\gamma \eta_d e^{k(t+1)} \right)^{\varphi+1}}{1+\gamma \eta_d e^{kt}} \leq e^{-3k}$$
\end{lemma}

\begin{proof}
    Noting that $\varphi<0$, the given condition of $t$ implies:
    $$\Rightarrow e^{k(t+1)} \geq \frac{e^{-\frac{4k}{\varphi}}-1}{\gamma \eta_d}$$
    $$\Rightarrow 1 + \gamma \eta_d e^{k(t+1)} \geq e^{-\frac{4k}{\varphi}}$$
    $$\Rightarrow \left(1 + \gamma \eta_d e^{k(t+1)}\right)^\varphi \leq e^{-4k}$$
    $$\Rightarrow \left(1 + \gamma \eta_d e^{k(t+1)}\right)^{\varphi+1} \leq e^{-4k}\left(1 + \gamma \eta_d e^{k(t+1)}\right)$$
    $$\Rightarrow \left(1 + \gamma \eta_d e^{k(t+1)}\right)^{\varphi+1} \leq e^{-4k}\left(e^{kt} + \gamma \eta_d e^{k(t+1)}\right)$$
    $$\Rightarrow \left(1 + \gamma \eta_d e^{k(t+1)}\right)^{\varphi+1} \leq e^{-3k}\left(1 + \gamma \eta_d e^{kt}\right)$$
    and we are done.
\end{proof}

We return to our main Theorem. Using Lemma \ref{e^3k} and \ref{1/e^3k}, we conclude that $R_{t+1}\leq R_t$ for all $t$ as given in the Theorem. Thus equation (\ref{dirty}) still applies for $t+1$, and $s_{ct+1}=0$. Induction completes our proof.

% \begin{multline} \label{rt_2}
%     \Rightarrow \lim_{t \rightarrow \infty} \frac{R_{t+1}}{R_t} = \frac{\left(\gamma e^{3k(t+1)}\right) \left(\gamma e^{kt}\right)}{\left(\gamma e^{k(t+1)}\right) \left(\gamma e^{3kt}\right)} \times \frac{\left(\gamma \eta_d e^{k(t+1)} \right)^{\varphi+1}}{\gamma \eta_d e^{kt}}\\
%     = \left(\gamma \eta_d \right)^{\varphi} \times \frac{\exp(3kt+3k+kt)}{\exp(kt+k+3kt)} \times \frac{\exp(k(t+1)(\varphi+1))}{\exp(kt)} = \left(\gamma \eta_d \right)^{\varphi} \times \exp(k\times((t+1)(\varphi+1)-t+2))\\
%     = \left(\gamma \eta_d \right)^{\varphi} \times \exp(k\times((t+1)\varphi+3)) = 0
% \end{multline}
% Therefore, for a sufficiently large $t$, $R_{t+1}<R_t$. 
\end{proof}

This result differs slightly from Acemoglu et al. (2012) in the sense that it is still possible for AI development to change the trend of innovation from the dirty to the clean sector. However, any such change either happens sufficiently early, or does not happen at all.

This result unfortunately leads to a pessimistic prediction:

\begin{theorem}
    If $\sigma>1$ and $s_{cT}=0$ for any $T$ such that:
    $$T \geq \frac{1}{k}\ln\frac{e^{-\frac{4k}{\varphi}}-1}{\gamma \eta_d}-1$$
    then $S_t=0$ for all sufficiently large $t$.
\end{theorem}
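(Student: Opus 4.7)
The plan is to combine Theorem 2 with an asymptotic growth estimate on dirty output. By Theorem 2, the hypothesis that $s_{cT}=0$ at a $T$ satisfying the stated lower bound immediately gives $s_{ct}=0$, and hence $s_{dt}=1$, for every $t>T$. It will then suffice to show that $Y_{dt}\to\infty$ as $t\to\infty$: in the environmental recursion (\ref{St}) the regeneration term $(1+\delta)S_t$ is bounded above by $(1+\delta)\bar S$, so once $\xi Y_{dt}$ exceeds this cap the unclamped right-hand side becomes non-positive, and the clamping rule forces $S_{t+1}=0$. A trivial induction (if $S_t=0$ then the unclamped value at step $t$ is $-\xi Y_{dt}\leq 0$) keeps $S_t$ at zero thereafter.

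For the growth step, I would first use the innovation law (\ref{Ajt_2}) with $s_{ct}=0$, $s_{dt}=1$ to freeze clean productivity at $A_{ct}=A_{cT}$, while $A_{dt}=A_{dT}\prod_{\tau=T+1}^{t}(1+\gamma\eta_d e^{k\tau})\to\infty$. Since $\sigma>1$ and $\alpha\in(0,1)$, the exponent $\varphi=(1-\alpha)(1-\sigma)$ is strictly negative, so $A_{dt}^{\varphi}\to 0$ and $A_{ct}^{\varphi}+A_{dt}^{\varphi}\to A_{cT}^{\varphi}>0$. Substituting into the Lemma 5 expression
\begin{equation*}
    Y_{dt}=\left(\frac{\alpha^{2}}{\psi}\right)^{\frac{\alpha}{1-\alpha}}\bigl(A_{ct}^{\varphi}+A_{dt}^{\varphi}\bigr)^{-\frac{\alpha+\varphi}{\varphi}}A_{ct}^{\alpha+\varphi}A_{dt},
\end{equation*}
the factor $\bigl(A_{ct}^{\varphi}+A_{dt}^{\varphi}\bigr)^{-(\alpha+\varphi)/\varphi}$ converges to $A_{cT}^{-(\alpha+\varphi)}$, which exactly cancels the $A_{ct}^{\alpha+\varphi}=A_{cT}^{\alpha+\varphi}$ prefactor, leaving $Y_{dt}\sim\bigl(\alpha^{2}/\psi\bigr)^{\alpha/(1-\alpha)}A_{dt}\to\infty$.

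The last step is then mechanical: choose $t^{*}>T$ so large that $\xi Y_{dt^{*}}\ge(1+\delta)\bar S$, use $S_{t^{*}}\le\bar S$ to conclude $S_{t^{*}+1}=0$, and induct. The only delicate point in the whole argument is the $Y_{dt}$ asymptotics above: because $\varphi$ is negative, one must track signs carefully in $\bigl(A_{ct}^{\varphi}+A_{dt}^{\varphi}\bigr)^{-(\alpha+\varphi)/\varphi}$ to verify that it tends to a strictly positive constant rather than to $0$ or $\infty$, and that the $A_{cT}$-dependence cancels cleanly against $A_{ct}^{\alpha+\varphi}$. Everything else is a direct application of Theorem 2 combined with the boundedness of $S_t$ in $[0,\bar S]$.
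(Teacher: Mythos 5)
Your proposal is correct and follows the same overall architecture as the paper's proof: invoke Theorem 2 to lock all scientists into the dirty sector after $T$, show $Y_{dt}\to\infty$, and then observe that the recursion (\ref{St}) clamps $S_{t+1}$ to zero once $\xi Y_{dt}$ exceeds $(1+\delta)\bar S\geq(1+\delta)S_t$. Where you differ is the middle step. The paper establishes $Y_{dt}\to\infty$ by taking logs of (\ref{Yot}) and computing a growth rate $Y_{dt}'/Y_{dt}=\gamma\eta_d\left(1-(\alpha+\varphi)A_{dt}^\varphi/(A_{ct}^\varphi+A_{dt}^\varphi)\right)\to\gamma\eta_d$, an informal continuous-time derivative argument grafted onto a discrete-time model (it also quietly drops the $I_{dt}=e^{kt}$ factor from the innovation law, writing $A_{dt+1}=(1+\gamma\eta_d)A_{dt}$). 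Your direct asymptotic analysis --- $A_{ct}$ frozen at $A_{cT}$, $A_{dt}\to\infty$, $\varphi<0$ so $A_{dt}^\varphi\to0$, the prefactor converging to $A_{cT}^{-(\alpha+\varphi)}$ and cancelling $A_{cT}^{\alpha+\varphi}$, leaving $Y_{dt}\sim(\alpha^2/\psi)^{\alpha/(1-\alpha)}A_{dt}$ --- reaches the same conclusion without any differentiation and is, if anything, the more rigorous of the two. You also make explicit the persistence of $S_t=0$ via the clamping rule, which the paper leaves implicit.
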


\begin{proof}
    The above Theorem shows that research occurs in the dirty sector only for sufficiently large $t$. This means that for all sufficiently large $t$:
    \begin{equation*}
        A_{dt+1} = (1+ \gamma \eta_d) A_{dt}
    \end{equation*}

    which means that $A_{dt}$ grows at a rate of $\gamma \eta_d$. In other words:
    \begin{equation*}
        \frac{A'_{dt}}{A_{dt}} = \gamma \eta_d
    \end{equation*}
    
    Now, taking ln of both sides of (\ref{Yot}):
    $$\ln Y_{dt}=-\frac{\alpha+\varphi}{\varphi}\ln(A^\varphi_{ct}+A^\varphi_{dt})+(\alpha+\varphi)\ln A_{ct}+\ln A_{dt} + const$$
    
    Partially differentiate w.r.t $A_{ct}$, we get the following for sufficiently large $t$:
    $$\frac{Y'_{dt}}{Y_{dt}}=-\frac{\alpha+\varphi}{\varphi}\times \frac{\varphi A^{\varphi-1}_{dt}A'_{dt}}{A^\varphi_{ct}+A^\varphi_{dt}}+\frac{A'_{dt}}{A_{dt}} = \frac{A'_{dt}}{A_{dt}}\left( 1-(\alpha+\varphi)\times \frac{A^\varphi_{dt}}{A^\varphi_{ct}+A^\varphi_{dt}}\right)=\gamma \eta_d \left( 1-(\alpha+\varphi)\times \frac{A^\varphi_{dt}}{A^\varphi_{ct}+A^\varphi_{dt}}\right)$$
    
    In the long run, $A_{dt}$ tends to infinity. Since $\varphi < 0$, $A_{dt}^\varphi$ tends to 0. With $A_{ct}^\varphi$ being a constant, $A_{dt}^\varphi/(A_{ct}^\varphi+A_{dt}^\varphi)$ tends to 0.

    Plugging that in the above equation, we get $Y'_{dt}/Y_{dt}=\gamma \eta_d$ in the long run. This means that the growth rate of $Y_{dt}$ is $\gamma \eta_d$, implying that $Y_{dt}$ grows to infinity.

    As a result, $Y_{dt}>(1+\rho)\xi^{-1}\bar{S}$ for all sufficiently large $t$. Equation (\ref{St}) then gives:
    \begin{equation*}
        S_{t+1}=-\xi Y_{dt}+(1+\rho)S_t<-(1+\rho)\bar{S} + (1+\rho)\bar{S} = 0
    \end{equation*}
    
    for all sufficiently large $t$, resulting in an environmental disaster.
\end{proof}

\subsection{Why Environmental Disaster Avoidance is Uncertain}

We provide an intuition as to why AI may not be able to reverse an environmental disaster. There are two forces in play in our model.

The first one is AI having the potential to benefit clean research 3 times more than dirty research. This means that for a successful research, the percentage increase in machine quality is much larger in the clean sector. As a result, the expected profit of a researcher in the clean sector significantly increases compared to in the dirty sector. This might be able to incentivise researchers to switch to the clean sector, which increases productivity and drives clean production.

Mathematically, this impact of AI is reflected by the term $(1+\gamma I_{jt})$ in equation (\ref{pijt_3}), which is carried to the term $(1+\gamma I_{ct})/(1+\gamma I_{dt})$ in (\ref{pijt}). In Theorem 2, when considering how $\Pi_{ct}/\Pi_{dt}$ changes over time, this impact is considered in the first term of (\ref{rt}).

However, there is also a second force in play, which is that AI likely only benefit dirty production in reality. Acemoglu et al. (2012) assumes that dirty production is sufficiently ahead of clean production so that all research initially happens in the dirty sector. Thus, until that changes, only dirty production actually benefits from AI.

Considering the allocation of scientists, this aspect is seen in the term $A_{ct-1}/A_{dt-1}$ in (\ref{pijt}). Only $A_{dt-1}$ would increase, and since this term has a positive exponent, this factor makes dirty research more profitable. In Theorem 2, this impact is reflected in the second term of (\ref{rt}).

We note that even though clean production may not benefit from AI at all from the start, this does not mean that research cannot switch to the clean sector. If the potential impact of AI on clean production (represented by $(1+\gamma I_{ct})$) grows faster than the actual impact of AI on dirty production (represented by $A_{dt}$), the ratio $\Pi_{ct}/\Pi_{dt}$ can still increase and become larger than 1, tipping research to the clean sector.

However, this scenario may not happen. For a sufficiently large $t$, both the first and second terms of (\ref{rt}) are bounded by constants. This means that for a sufficiently large $t$, there is an upper bound to how fast AI's potential to benefit clean research can increase, an a lower bound to the growth rate of productivity in the dirty sector. In such a scenario, research cannot switch to the clean sector. We also note that in the scenario where the switch do happen, it must happen early at a small $t$, or it won't happen at all 

\subsection{Environmental Disaster Is Averted With Government Intervention}

We now prove that AI, with the help of government intervention, allows us to avoid an environmental disaster. We first prove the following theorem, which is essentially a mirror of Theorem 2.

\begin{theorem}
    If $s_{cT}=1$ for
    $$T \geq \max \left(\frac{1}{2k}\ln \frac{\gamma (e^{2k} + e^k + 1) + 2}{\gamma e^{2k}}, \frac{1}{3k}\ln \frac{e^{-\frac{k}{\varphi}}-1}{\gamma \eta_c}-1 \right)$$
    then $s_{ct}=1$ for all $t>T$.
\end{theorem}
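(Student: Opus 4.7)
The plan is to mirror the proof of Theorem 2, with all inequalities reversed and the roles of the clean and dirty sectors swapped. Let $R_t$ denote the left-hand side of (\ref{clean}); then the condition that $s_{cT}=1$ is an equilibrium becomes $R_T\geq 1$. I would show by induction on $t\geq T$ that $R_t\geq 1$, the inductive step being $R_{t+1}\geq R_t$.

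Since $s_{ct}=1$ implies $A_{ct}=(1+\gamma\eta_c e^{3kt})A_{ct-1}$ and $A_{dt}=A_{dt-1}$, substituting into the definition of $R_{t+1}$ and dividing by $R_t$ yields, after cancellation,
\begin{equation*}
\frac{R_{t+1}}{R_t} \;=\; \frac{(1+\gamma e^{3k(t+1)})(1+\gamma e^{kt})}{(1+\gamma e^{k(t+1)})(1+\gamma e^{3kt})} \;\times\; \frac{1+\gamma\eta_c e^{3kt}}{(1+\gamma\eta_c e^{3k(t+1)})^{\varphi+1}},
\end{equation*}
the direct mirror of (\ref{rt}). I would then prove two mini-lemmas, one for each factor, corresponding exactly to the two clauses in the $\max$ defining the threshold $T$.

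The productivity (second) factor is handled by the mirror of Lemma \ref{1/e^3k}: the second clause of the max rearranges to $1+\gamma\eta_c e^{3k(t+1)}\geq e^{-k/\varphi}$; since $\varphi<0$, raising to the $\varphi$-th power reverses the inequality to $(1+\gamma\eta_c e^{3k(t+1)})^{\varphi}\leq e^{-k}$, hence $(1+\gamma\eta_c e^{3k(t+1)})^{\varphi+1}\leq e^{-k}(1+\gamma\eta_c e^{3k(t+1)})$. Combined with the elementary bound $(1+\gamma\eta_c e^{3kt})/(1+\gamma\eta_c e^{3k(t+1)})\geq e^{-3k}$ (which reduces to $e^{3k}\geq 1$), this forces the productivity factor $\geq e^{-2k}$. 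For the AI-ratio (first) factor I would target the mirror of Lemma \ref{e^3k}, a lower bound of $e^{2k}$. Clearing denominators with $u=e^{kt}$ and using the factorizations $e^{2k}-1=(e^k-1)(e^k+1)$ and $e^{3k}-1=(e^k-1)(e^{2k}+e^k+1)$ reduces the target to the cubic $\gamma u^3 e^{2k}\geq (e^k+1)+\gamma u(e^{2k}+e^k+1)$, and multiplying the first clause of the max by $u$ is precisely tuned to deliver this cubic.

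The hard part is the AI-ratio lower bound. In Theorem 2 the analogous factor was bounded above by $e^{3k}$ universally, without any threshold on $t$; here the matching lower bound $e^{2k}$ is only attained asymptotically as $t\to\infty$, so Lemma A genuinely needs the first clause of the max, and one must check the cubic with care near the degenerate small-$u$ regime. Once both bounds are in place, their product gives $R_{t+1}/R_t \geq e^{2k}\cdot e^{-2k}=1$, so $R_{t+1}\geq R_t\geq 1$, and $s_{ct+1}=1$ follows from (\ref{clean}); induction then propagates $s_{ct}=1$ to all $t>T$.
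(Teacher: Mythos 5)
Your proposal reproduces the paper's argument essentially verbatim: the same induction on $R_t\geq 1$ with $R_t$ the left-hand side of (\ref{clean}), the same two-factor decomposition of $R_{t+1}/R_t$, and the same pair of bounds ($e^{2k}$ for the AI-ratio factor via the cubic in $u=e^{kt}$ and the factorizations of $e^{2k}-1$ and $e^{3k}-1$; $e^{-2k}$ for the productivity factor via raising $1+\gamma\eta_c e^{3k(t+1)}\geq e^{-k/\varphi}$ to the $\varphi$-th power), each tied to its clause of the $\max$. The approach and all key steps match the paper's proof, so no further comparison is needed.
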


\begin{proof}
The proof is structurally identical to that of Theorem 2. We will once again use induction: Consider a sufficiently large time $t$ where $s_{ct}=1$. According to (\ref{clean}), this means that:
\begin{equation*}
    \frac{\eta_c}{\eta_d}\times \frac{1+\gamma e^{3kt}}{1+\gamma e^{kt}}\times \left(1+\gamma \eta_c e^{3kt} \right)^{-\varphi-1}\times \left(\frac{A_{ct-1}}{A_{dt-1}}\right)^{-\varphi}\geq 1
\end{equation*}

Let the left hand side be $R_t$. Consider $R_{t+1}$, taking note that $s_{ct}=1$ and $s_{dt}=0$:
\begin{multline*}
    R_{t+1} = \frac{\eta_c}{\eta_d}\times \frac{1+\gamma e^{3k(t+1)}}{1+\gamma e^{k(t+1)}}\times \left(1+\gamma \eta_c e^{3k(t+1)} \right)^{-\varphi-1}\times \left(\frac{A_{ct}}{A_{dt}}\right)^{-\varphi} \\
    = \frac{\eta_c}{\eta_d}\times \frac{1+\gamma e^{3k(t+1)}}{1+\gamma e^{k(t+1)}}\times \left(1+\gamma \eta_c e^{3k(t+1)} \right)^{-\varphi-1}\times \left(1+\gamma \eta_c e^{3kt}\right)^{-\varphi} \times \left(\frac{A_{ct-1}}{A_{dt-1}}\right)^{-\varphi}
\end{multline*}

We then have the ratio:
\begin{equation}
    \frac{R_{t+1}}{R_t} = \frac{\left(1+\gamma e^{3k(t+1)}\right) \left(1+\gamma e^{kt}\right)}{\left(1+\gamma e^{k(t+1)}\right) \left(1+\gamma e^{3kt}\right)} \times \frac{1+\gamma \eta_c e^{3kt}}{\left(1+\gamma \eta_c e^{3k(t+1)} \right)^{\varphi+1}}
\end{equation}

Now we prove the following Lemmas:

\begin{lemma} \label{e^2k}
    If:
    $$t \geq \frac{1}{2k}\ln \frac{\gamma (e^{2k} + e^k + 1) + 2}{\gamma e^{2k}}$$
    then:
    $$\frac{\left(1+\gamma e^{3k(t+1)}\right) \left(1+\gamma e^{kt}\right)}{\left(1+\gamma e^{k(t+1)}\right) \left(1+\gamma e^{3kt}\right)} \geq e^{2k}$$
\end{lemma}

\begin{proof}
    Manipulating the given condition of $t$, we get:
    $$\gamma e^{2kt+2k} \geq \gamma (e^{2k} + e^k + 1) + 2$$

    Now, note that since $e^k \geq 1$, we have $1 \leq e^k \leq e^{kt} \Rightarrow 2 \geq (e^k+1)/e^{kt}$. Hence:
    $$\gamma e^{2kt+2k} \geq \gamma (e^{2k} + e^k + 1) + \frac{e^k+1}{e^{kt}}$$
    $$\gamma e^{3kt+2k} \geq \gamma (e^{2k} + e^k + 1) e^{kt} + e^k+1$$

    This expression is actually equivalent to what we want to prove:
    $$\gamma (e^k-1) e^{3kt+2k} \geq \gamma (e^k-1)(e^{2k} + e^k + 1) e^{kt} + (e^k-1)(e^k+1)$$
    $$\gamma e^{3kt+3k} - \gamma e^{3kt+2k} \geq \gamma (e^{3k}-1) e^{kt} + e^{2k} - 1$$
    $$\gamma e^{3kt+3k} + \gamma e^{kt} + 1 \geq \gamma e^{3kt+2k} + \gamma e^{kt+3k} + e^{2k}$$

    which becomes clear after we add a term to factorise each side:
    $$\gamma^2 e^{4kt+3k} + \gamma e^{3kt+3k} + \gamma e^{kt} + 1 \geq \gamma^2 e^{4kt+3k} + \gamma e^{3kt+2k} + \gamma e^{kt+3k} + e^{2k}$$
    $$\left(1+\gamma e^{3k(t+1)}\right) \left(1+\gamma e^{kt}\right) \geq e^{2k} \left(1+\gamma e^{k(t+1)}\right) \left(1+\gamma e^{3kt}\right)$$

    This completes our proof.
\end{proof}

\begin{lemma}\label{1/e^2k}
    If:
    $$t \geq \frac{1}{3k}\ln \frac{e^{-\frac{k}{\varphi}}-1}{\gamma \eta_c}-1$$
    then:
    $$\frac{1+\gamma \eta_c e^{3kt}}{\left(1+\gamma \eta_c e^{3k(t+1)} \right)^{\varphi+1}} \geq e^{-2k}$$
\end{lemma}

\begin{proof}
    Noting that $\varphi<0$, the given condition of $t$ implies:
    $$\Rightarrow 3k(t+1) \geq \ln \frac{e^{-\frac{k}{\varphi}}-1}{\gamma \eta_c}$$
    $$\Rightarrow e^{3k(t+1)} \geq \frac{e^{-\frac{k}{\varphi}}-1}{\gamma \eta_c}$$
    $$\Rightarrow 1 + \gamma \eta_c e^{3k(t+1)} \geq e^{-\frac{k}{\varphi}}$$
    $$\Rightarrow \left(1 + \gamma \eta_c e^{3k(t+1)}\right)^\varphi \leq e^{-k}$$
    $$\Rightarrow \left(1 + \gamma \eta_c e^{3k(t+1)}\right)^{\varphi+1} \leq e^{-k}\left(1 + \gamma \eta_c e^{3k(t+1)}\right)$$
    $$\Rightarrow \left(1 + \gamma \eta_c e^{3k(t+1)}\right)^{\varphi+1} \leq e^{-k}\left(e^{3k} + \gamma \eta_c e^{3k(t+1)}\right)$$
    $$\Rightarrow \left(1 + \gamma \eta_c e^{3k(t+1)}\right)^{\varphi+1} \leq e^{2k}\left(1 + \gamma \eta_c e^{3kt}\right)$$
    and we are done.
\end{proof}

% Noting that $-\varphi>0$:
% \begin{multline}
%     \Rightarrow \lim_{t \rightarrow \infty} \frac{R_{t+1}}{R_t} = \frac{\left(\gamma e^{3k(t+1)}\right) \left(\gamma e^{kt}\right)}{\left(\gamma e^{k(t+1)}\right) \left(\gamma e^{3kt}\right)} \times \frac{\gamma \eta_c e^{3kt}}{\left(\gamma \eta_c e^{3k(t+1)} \right)^{\varphi+1}}\\
%     = \left(\gamma \eta_d \right)^{-\varphi} \times \frac{\exp(3kt+3k+kt)}{\exp(kt+k+3kt)} \times \frac{\exp(3kt)}{\exp(3k(t+1)(\varphi+1))} = \left(\gamma \eta_d \right)^{-\varphi} \times \exp(k\times(2+t-(t+1)(\varphi+1)))\\
%     = \left(\gamma \eta_d \right)^{-\varphi} \times \exp(k\times((t+1)(-\varphi)+1)) = \infty
% \end{multline}

We return to our main Theorem. Using Lemma \ref{e^2k} and \ref{1/e^2k}, we conclude that $R_{t+1}\geq R_t$ for all $t$ as given in the Theorem. Thus equation (\ref{clean}) still applies for $t+1$, and $s_{ct+1}=1$. Induction completes our proof.
\end{proof}

The above theorem means that for a sufficiently far point in the future, if all scientists are working in the clean sector, then it remains the case at all times from that point onwards.

This gives rise to the following theorem on the effect of government intervention:

\begin{theorem}
    If $\bar{S}$ is sufficiently high, then a sufficiently large but only temporary subsidy for the clean sector and/or taxation on the dirty sector will prevent an environmental disaster.
\end{theorem}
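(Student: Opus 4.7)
The plan is to use government intervention only to push the system into the basin of attraction of Theorem 4, after which clean-only research is self-sustaining, and then to use the largeness of $\bar{S}$ to protect environmental quality during the finite transient.

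First, I would introduce a subsidy/tax active at every date $t\in\{0,1,\dots,T\}$ that multiplies the ratio $\Pi_{ct}/\Pi_{dt}$ in (\ref{pijt}) by a factor under the government's control. Choosing that factor large enough forces $\Pi_{ct}/\Pi_{dt}>1$ at each such $t$, so by Theorem 1 every scientist works in the clean sector; the innovation recursion (\ref{Ajt_2}) then freezes $A_{dt}=A_{d0}$ while $A_{ct}=\prod_{s\leq t}(1+\gamma\eta_c e^{3ks})A_{c0}$ grows super-exponentially throughout the window. Next I pick $T$ large enough that (i) it exceeds the threshold appearing in Theorem 4, and (ii) the factor $(A_{c,T-1}/A_{d,T-1})^{-\varphi}$, whose exponent $-\varphi$ is positive, is already large enough to push the left-hand side of (\ref{clean}) above $1$ with no further intervention. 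With (\ref{clean}) satisfied at $t=T$ unaided, Theorem 4 propagates $s_{ct}=1$ to every $t>T$, so $A_{dt}$ remains frozen at $A_{dT}$ forever.

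The remaining step is an environmental-quality estimate. On the finite window $[0,T]$, both $A_{ct}$ and $A_{dt}$ are bounded by constants depending only on $T$ and the parameters, so (\ref{Yot}) gives a uniform bound $Y_{dt}\leq M_1$; iterating (\ref{St}) yields $S_T\geq (1+\delta)^T\bar{S}-\xi M_1\sum_{s=0}^{T-1}(1+\delta)^{T-1-s}$, which is bounded away from $0$ once $\bar{S}$ is sufficiently large. For $t>T$, the key point is that $A_{dt}$ is fixed while $A_{ct}\to\infty$; since $\varphi<0$, the factor $A_{ct}^\varphi$ in (\ref{Yot}) becomes negligible compared to $A_{dt}^\varphi$, and the formula asymptotically gives $Y_{dt}\sim c\,A_{dt}^{1-\alpha-\varphi}A_{ct}^{\alpha+\varphi}$ for an explicit constant $c$. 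Combining this decay with a discrete-Gronwall bound on (\ref{St}) keeps $S_t$ bounded below by a positive constant for all $t$.

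The step I expect to be the main obstacle is this last asymptotic bound: $Y_{dt}$ must be small enough relative to $\delta S_t/\xi$ that $S_{t+1}$ remains uniformly positive, and (\ref{St}) amplifies any shortfall by the factor $1+\delta>1$. This effectively forces the parametric regime $\alpha+\varphi<0$ (equivalently $\sigma>1/(1-\alpha)$), which I expect to be absorbed into the ``sufficiently large $\bar{S}$'' hypothesis; once the decay of $Y_{dt}$ is established, the rest is bookkeeping on the linear recursion (\ref{St}).
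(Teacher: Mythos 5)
Your proposal follows the same overall route as the paper: a temporary multiplicative wedge $G_t$ on $\Pi_{ct}/\Pi_{dt}$ forces $s_{ct}=1$ up to the threshold $T$ of Theorem 4, Theorem 4 then makes clean-only research self-sustaining, $A_{dt}$ is frozen by (\ref{Ajt_2}), and a large $\bar{S}$ absorbs the damage. However, you are more careful than the paper at exactly the two places where its argument is loose, and this is worth recording. First, the handoff at $T$: the paper simply invokes Theorem 4 once the forced allocation reaches $s_{cT}=1$, but Theorem 4's induction starts from the \emph{unaided} inequality (\ref{clean}) holding at $T$, which a subsidized equilibrium does not automatically deliver; your extra requirement that $(A_{c,T-1}/A_{d,T-1})^{-\varphi}$ has grown enough for (\ref{clean}) to hold without $G_T$ closes this gap (and is attainable, since that factor grows super-exponentially during the intervention window). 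Second, the behavior of $Y_{dt}$ after the switch: the paper differentiates $\ln Y_{dt}$ only through $A_{dt}$ and concludes $Y_{dt}$ is constant, but in this regime it is $A_{ct}$ that moves; the correct computation gives $Y'_{dt}/Y_{dt}=(\alpha+\varphi)(A'_{ct}/A_{ct})\,A_{dt}^{\varphi}/(A_{ct}^{\varphi}+A_{dt}^{\varphi})$, consistent with your asymptotic $Y_{dt}\sim c\,A_{dt}^{1-\alpha-\varphi}A_{ct}^{\alpha+\varphi}$, so the sign of $\alpha+\varphi$ is decisive.

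The one point to correct in your plan is the hope that the restriction $\alpha+\varphi<0$ can be ``absorbed into the sufficiently large $\bar{S}$ hypothesis.'' It cannot: if $\alpha+\varphi>0$ then $Y_{dt}\to\infty$ along the clean-only path and (\ref{St}) forces $S_t=0$ in finite time no matter how large $\bar{S}$ is, so $\alpha+\varphi\leq 0$ (equivalently $\sigma\geq 1/(1-\alpha)$, the strong-substitutes condition of Acemoglu et al.) is a genuine additional hypothesis that the theorem statement, and the paper's proof, omit. Conversely, once $\alpha+\varphi\leq 0$ you do not need the full decay estimate or a discrete Gronwall argument: $Y_{dt}$ is then uniformly bounded by some $M$ (bounded on $[0,T]$ by finiteness, and for $t>T$ by $C A_{dT}$ since the factor $\bigl(A_{ct}^{\varphi}/(A_{ct}^{\varphi}+A_{dt}^{\varphi})\bigr)^{(\alpha+\varphi)/\varphi}$ lies in $(0,1]$), and $\bar{S}\geq\xi M/\delta$ already keeps $S_t=\bar{S}$ for all $t$ directly from (\ref{St}).
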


\begin{proof}
    A subsidy for the clean sector can be seen as a percentage increase in expected profit of scientists in the clean sector. Similarly, a taxation on the dirty sector is interpreted as a percentage decrease in expected profit of scientists in the dirty sector.

    Let the combined effect of the subsidy and/or taxation at time $t$ be expressed by $G_t$ so that:
    \begin{equation}
        \frac{\Pi_{ct}}{\Pi_{dt}}=G_t \times \frac{\eta_c}{\eta_d}\times \frac{1+\gamma I_{ct}}{1+\gamma I_{dt}}\times \left(\frac{1+\gamma \eta_c I_{ct} s_{ct}}{1+\gamma \eta_d I_{dt} s_{dt}}\right)^{-\varphi-1}\times \left(\frac{A_{ct-1}}{A_{dt-1}}\right)^{-\varphi}
    \end{equation}

    Assume that $T$ is the sufficiently large constant in Theorem 4. Then for all $i=1,...,T$, we set $G_t$ so that $\Pi_{ct}/\Pi_{dt}\geq 1$. This demonstrates that the government intervention is only temporary.

    By theorem 4, we deduce that $s_{ct}=1$ for all $t$. This means that $s_{dt}=0$, and thus $A'_{ct}=0$.

    The rest of the proof closely mirrors that of Theorem 3. Taking ln of both sides of (\ref{Yot}):
    $$\ln Y_{dt}=-\frac{\alpha+\varphi}{\varphi}\ln(A^\varphi_{ct}+A^\varphi_{dt})+(\alpha+\varphi)\ln A_{ct}+\ln A_{dt}$$
    
    Partially differentiate w.r.t $A_{ct}$:
    $$\frac{Y'_{dt}}{Y_{dt}}=-\frac{\alpha+\varphi}{\varphi}\times \frac{\varphi A^{\varphi-1}_{dt}A'_{dt}}{A^\varphi_{ct}+A^\varphi_{dt}}+\frac{A'_{dt}}{A_{dt}} = \frac{A'_{dt}}{A_{dt}}\left( 1-(\alpha+\varphi)\times \frac{A^\varphi_{dt}}{A^\varphi_{ct}+A^\varphi_{dt}}\right)=0$$
    
    Thus $Y'_{dt}=0$, meaning $Y_{dt}$ is unchanged for all $t$. Equation (\ref{St}) then immediately implies that for a sufficiently large $\bar{S}$, $S_t>0$ for all $t$.
\end{proof}

In essence, the taxation on the dirty sector and/or the subsidy on the clean sector provide a financial incentive for scientists to move from the dirty to the clean sector. This allows innovation to happen in the clean sector instead.

Moreover, recall that there exists a tipping point where if innovation only happen in the clean sector at that point, then innovation will only happen in the clean sector in the future. Thus, government intervention only needs to continue until this tipping point, and is therefore temporary.

It should be noted that this is a more promising result than in Acemoglu et al. (2012) as Proposition 3 in this paper shows that temporary subsidies can only prevent an environmental disaster if clean and dirty input are strong substitutes.

\section{Computer Simulation}

\subsection{Choices of Parameters}

Our parameter choices are mostly similar to Acemoglu et al. (2012), with some minor changes. For simulation constants, we set the period length to 2 and the number of periods to 50 to investigate the next 100 years.

Regarding constants on the economy, $\sigma = 10$ and $\alpha = 1/3$. We normalise $\psi = \alpha^2 = 1/9$ without loss of generality.

Regarding constants on the innovation process, we set $\gamma=1$ and $\eta_c=\eta_d=0.02$ so that productivity without AI increases by 2\% per year.

The value of $Y_{c0}$ and $Y_{d0}$ are taken as `the production of nonfossil and fossil fuel in the world primary energy supply from 2002 to 2006' in quadrillion of British thermal unit (Btu). We then calculate the corresponding values of $A_{c0}$ and $A_{d0}$ as follows: from equations (\ref{Ynt}) and (\ref{Yot}):
\begin{equation}
    \frac{Y_{ct}}{Y_{dt}} = \left(\frac{A_{dt}}{A_{ct}}\right)^{\alpha+\varphi-1} \Rightarrow  A_{dt}= \left(\frac{Y_{ct}}{Y_{dt}}\right)^\frac{1}{\alpha+\varphi-1} A_{ct} = rA_{ct}
\end{equation}

Therefore, equation (\ref{Ynt}) becomes:
\begin{equation*}
    Y_{ct} = \left(\frac{\alpha^2}{\psi}\right)^\frac{\alpha}{1-\alpha} (A_{ct}^\varphi + (rA_{ct})^\varphi)^{-\frac{\alpha+\varphi}{\varphi}} A_{ct} (rA_{ct})^{\alpha+\varphi} = \left(\frac{\alpha^2}{\psi}\right)^\frac{\alpha}{1-\alpha} (1 + r^{-\varphi})^{-\frac{\alpha+\varphi}{\varphi}} A_{ct}
\end{equation*}
\begin{equation}
    \Rightarrow A_{ct}= \left(\frac{\alpha^2}{\psi}\right)^{-\frac{\alpha}{1-\alpha}} (1 + r^{-\varphi})^{\frac{\alpha+\varphi}{\varphi}} Y_{ct}
\end{equation}

and then we have $A_{dt}=rA_{ct}$.

For $k$, we will calculate 4 of its possible values, based on 4 different prediction on how AI will affect GDP. Here, in the interest of simplicity, we assume that consumption $C_t$ is GDP. We summarise the predictions below:

\begin{table}[h]
    \centering
    \begin{tabular}{|c|c|}
        \hline
        \textbf{Authors} & \textbf{AI contribution to GDP in next 10 years} \\
        \hline
        Acemoglu (2024) & 0.9\% \\
        \hline
        Goldman Sachs (2023) & 7\% \\
        \hline
        McKinsey \& Company (2023) & 33.2\% \\
        \hline
        Korinek \& Suh (2024) & 64.4\% \\
        \hline
    \end{tabular}
    \caption{Predictions of AI contribution to GDP}
\end{table}

For environment-related parameters, we first define:
\begin{equation}
    \Delta_t = 3\log_2 \frac{CO2_t}{280}
\end{equation}

where at time $t$, $\Delta_t$ is the increase in temperature compared to the pre-industrial period, and $CO2_t$ is the $CO_2$ concentration in the atmosphere.

We define the environmental disaster as an increase of $6^oC$ in temperature, meaning $\Delta_{disaster}=6$, giving $CO2_{disaster}=1120$. 

The environmental quality is then defined as:
\begin{equation}
    S_t=CO2_{disaster}-\max(CO2_t,280)
\end{equation}

Moreover, we define $S_0=\bar{S}-99$ instead, to reflect that the current atmospheric $CO_2$ concentration has increased by 99 ppm.

Finally, we calculate $\xi$ as the $CO_2$ emission per unit of dirty input produced from 2002 to 2006, and $\delta$ so that only half of the $CO_2$ emitted from production contributes to the increase in atmospheric $CO_2$ concentration.

\subsection{Results}

The values of $k$ are as follows. We also calculate the lower bound for $T$ in Theorem 4, which represents the minimum number of years the government needs to intervene to switch research entirely to the clean sector.

\begin{table}[H]
    \centering
    \begin{tabular}{|c|c|c|c|c|c|}
        \hline
        \textbf{No.} & \textbf{Authors} & \textbf{AI impact} & \textbf{$k$} & \textbf{Intervention} & \textbf{Avoid disaster?}\\
        \hline
        1 & Acemoglu (2024) & 0.9\% & 0.0160 & 100 years & No \\
        \hline
        2 & Goldman Sachs (2023) & 7\% & 0.102 & 15 years & Yes \\
        \hline
        3 & McKinsey \& Company (2023) & 33.2\% & 0.180 & 8 years & Yes \\
        \hline
        4 & Korinek \& Suh (2024) & 64.4\% & 0.201 & 7 years & Yes \\
        \hline
    \end{tabular}
    \caption{Simulation Results}
\end{table}

Only the first, most pessimistic prediction results in an environmental disaster; all others show that we will avoid an environmental disaster.

Figure 1 shows that the first scenario will see all research being conducted in the dirty sector, as expected. For other scenarios, research completely switch to the clean sector, but at different times. In the second scenario, it takes around 20 years, while for the third and fourth one, this change takes place almost immediately.

\begin{figure}[H]
    \centering
    \includegraphics[width=0.5\linewidth]{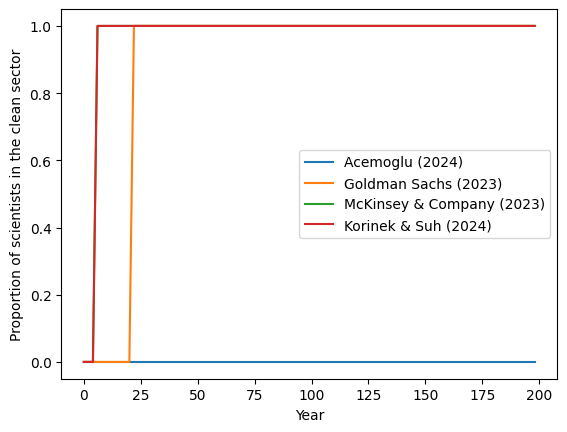}
    \caption{Proportion of scientists in the clean sector}
\end{figure}

Naturally, this has a decisive impact on production. Figure 2 and 3 shows that when researches are all in the dirty sector (as in scenario 1), production of dirty input skyrockets, and clean input contribution to final good production is virtually zero. In the opposite scenario, when researches are all in the clean sector, production of dirty input is virtually zero, and almost all of the final good is produced from clean inputs.

\begin{figure}[H]
    \centering
    \includegraphics[width=0.5\linewidth]{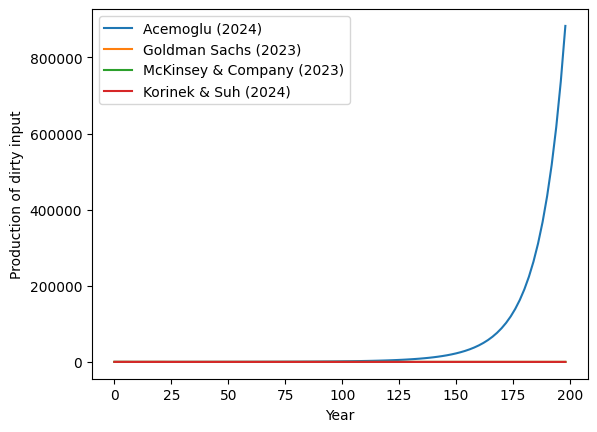}
    \caption{Production of dirty input}
\end{figure}

\begin{figure}[H]
    \centering
    \includegraphics[width=0.5\linewidth]{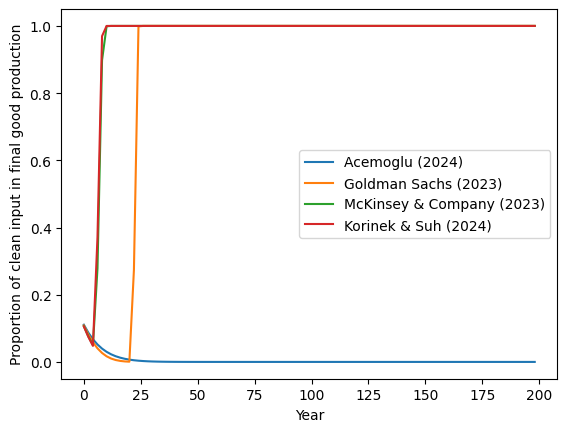}
    \caption{Proportion of clean input in final good production}
\end{figure}

Figure 4 demonstrates that when researches and production switch to the clean sector, the rate at which consumption increases is significantly larger (note that we cap consumption at $10^{100}$ to avoid overflow issues during simulation). This is because researches and production are better able to utilize AI, when AI has a larger impact on the clean sector than the dirty sector.

\begin{figure}[H]
    \centering
    \includegraphics[width=0.5\linewidth]{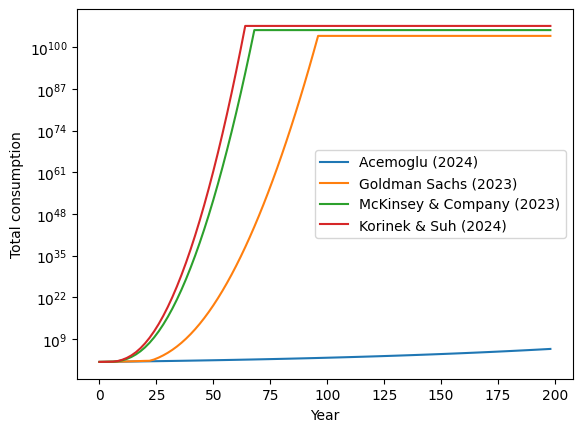}
    \caption{Total consumption}
\end{figure}

Figure 5 and 6 shows the effect on the environment in those scenario. With Acemoglu's prediction, the temperature will gradually increase starting from the $90^{th}$ year (it only stops at 6 degrees Celcius because we cap the temperature increase at that point). As a result, an environmental disaster will occur in around 125 years. It should be noted that during the first 90 years, dirty production is still increasing ``behind the scene", so that by the $90^{th}$, its detrimental environmental impact outweigh the regenerate ability of the environment.

For other scenarios, our simulation gives almost identical results, with temperature returning to that of pre-industrial period and an environmental disaster is avoided.

\begin{figure}[H]
    \centering
    \includegraphics[width=0.5\linewidth]{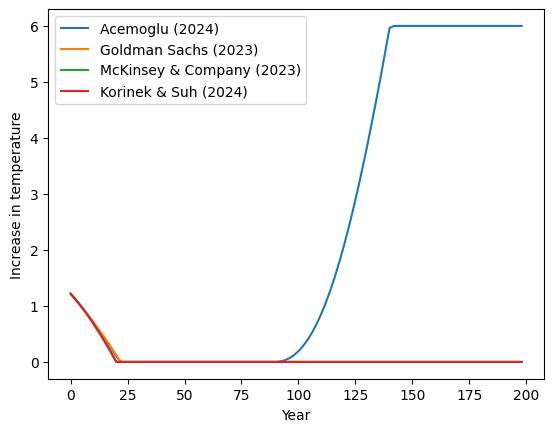}
    \caption{Increase in temperature}
    \label{fig:enter-label}
\end{figure}

\begin{figure}[H]
    \centering
    \includegraphics[width=0.5\linewidth]{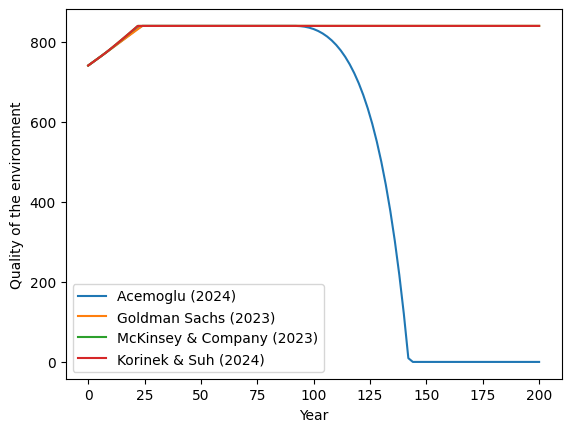}
    \caption{Quality of the environment}
\end{figure}

\section{Conclusion}
With climate change being one of the most serious and urgent topic nowadays, there have been many attempts at new, innovative ways to protect the environment. Many of such projects leverage the huge potential of AI, leading some to believe that AI will lead our environmental efforts. However, many are more cautious and doubt the extent to which AI can contribute to the economy and the environment.

This paper partially validates those doubt, as it shows quantitatively that AI might not be able to avert an environmental disaster, and argue qualitatively that it is unlikely to do so. Only when paired with temporary government intervention can AI definitively prevent an environmental disaster.

That said, this paper also shows quantitatively that AI allows for more optimistic prediction. Acemoglu et al. (2012) shows that an environmental disaster will definitely occur without government intervention, while with AI, there is a chance that it will push clean production just enough to prevent such a disaster. Moreover, Acemoglu et al. proves that a temporary subsidy can only prevent an environmental disaster if clean and dirty input are strong substitutes. With AI, a temporary subsidy (or taxation on the dirty sector) is guaranteed to achieve that goal.

This paper is one of the earliest attempt at quantifying the impact of AI on the environment. Future studies can continue to improve upon this paper by considering different growth rate of AI (e.g. logistic growth), different ways to quantify the impact of AI in the model, and prove stronger mathematical results that highlight the potential of AI in protecting the environment.

\section*{Code}
The code used for computer simulation is available at the following Github repository:

\texttt{https://github.com/ld-minh4354/Impact-of-AI-on-Environmental-Quality}

\section*{Declarations}

\textbf{Funding:} The authors did not receive support from any organization for the submitted work.

\noindent
\textbf{Conflict of Interest:} The authors have no relevant financial or non-financial interests to disclose.

\noindent
\textbf{Ethical Approval:} This article does not contain any studies with human participants or animals performed by any of the authors.

\end{document}